\let\hat\widehat
\let\tilde\widetilde
\begin{document}

\author{Ming Yu\thanks{Booth School of Business, The University of Chicago. Email: \href{mailto:ming93@uchicago.edu}{ming93@uchicago.edu} } , 
Varun Gupta\thanks{Booth School of Business, The University of Chicago. Email: \href{mailto:varun.gupta@chicagobooth.edu}{varun.gupta@chicagobooth.edu}} , 
and Mladen Kolar\thanks{Booth School of Business, The University of Chicago. Email: \href{mailto:mladen.kolar@chicagobooth.edu}{mladen.kolar@chicagobooth.edu}}}

\title{Constrained High Dimensional Statistical Inference}
\date{}
\maketitle

\begin{abstract} \label{sec:abstract}

In typical high dimensional statistical inference problems, confidence intervals and hypothesis tests are performed for a low dimensional subset of model parameters under the assumption that the parameters of interest are unconstrained. However, in many problems, there are natural constraints on model parameters and one is interested in whether the parameters are on the boundary of the constraint or not. e.g. non-negativity constraints for transmission rates in network diffusion. In this paper, we provide algorithms to solve this problem of hypothesis testing in high-dimensional statistical models under constrained parameter space. We show that following our testing procedure we are able to get asymptotic designed Type I error under the null. Numerical experiments demonstrate that our algorithm has greater power than the standard algorithms where the constraints are ignored.
We demonstrate the effectiveness of our algorithms on two real datasets where we have {\emph{intrinsic}} constraints on the parameters.

%

\end{abstract}

\section{Introduction}\label{sec:introduction}

Statistical estimation of high dimensional problems has been attracting more and more attention due to the abundance of such data in many emerging fields such as genetic studies, social network analysis, etc.
High dimensional geometry is inherently different from low-dimensional geometry. 
As an example, for linear regression, in low dimensions the Ordinary Least Square (OLS) estimator allows for constructing confidence intervals and hypothesis tests for the true coefficient. In high dimensional models OLS is ill-conditioned so instead we have to solve for penalized estimators like LASSO. 
In low dimensions we can test for hypothesis such as $H_0 : \alpha^* = 0$ by partial likelihood function while in high dimensions this also fails, due to the large amount of nuisance parameters.

In this paper we consider a hypothesis testing problem in a high dimensional model under constrained parameter space. 
For many problems, before analyzing data and fitting models we might already know some constraints on the parameters. 
This can also be viewed as prior information on the parameters.
For example in isotonic regression \cite{best1990active, yang2019contraction, dai2019bias} we have a constraint that the variables are non-decreasing; in non-negative least square problem \cite{slawski2013non} we have a constraint that the coefficients are non-negative. 
in real-world reinforcement learning applications, we need to take into consideration the safety of the agent \cite{berkenkamp2017safe, wen2018constrained, yu2019convergent}. 
Also, in Gaussian process, it is sometimes assumed that the parameters satisfy some linear inequality constraints \cite{lopez2019gaussian}.

With this additional information the statistical inference and hypothesis testing for the parameters may be different. For example, consider a simple model: $X \sim N(\mu,1)$. In general if we want to test whether $\mu$ is 0 or not, i.e. test for $H_0 : \mu=0$ versus $H_A : \mu \neq 0$, we will reject $H_0$ if the absolute value of the mean $|\bar x|$ is relatively large. However, if we have the constraint that $\mu \geq 0$, then we are testing $H_0 : \mu=0$ versus $H_A : \mu > 0$, and we reject $H_0$ only when $\bar x$ is relatively large.

When we have constraints on parameters, a natural question we want to answer is
whether the parameter lies on the boundary or is away from the boundary, since these two cases are usually very different.
For example for nonnegativity constraint, we want to know whether the parameter is exactly zero or strictly positive; for monotonic constraint we want to know whether the two variables are equal or one is strictly greater than the other. 


In this paper we perform statistical inference (hypothesis testing) for low dimensional parameters in a high dimensional model under cone constraint. Denote the parameter $\bm \beta = ( \bm \alpha, \bm \theta)$, where $\bm \alpha$ is the low dimensional parameter of interest and $\bm \theta$ denotes nuisance parameters. Denote the constraint set $C$ as a closed and convex cone, and let $M \in C$ be a linear space in $C$. 
In most of the cases $C$ is a polyhedron and the linear space $M$ denotes the (subset of) the boundary set of $C$. 
In this paper we want to test 
\begin{equation}
H_0 : \bm \alpha \in M \,\,\, \text{versus} \,\,\, H_A : \bm \alpha \in C\backslash M,
\end{equation}
where we have the constraint $\bm \alpha \in C$. 
We develop an algorithm for this constrained testing problem in high dimensional models. Following our procedure we show that the hypothesis test method we propose has asymptotic designed Type I error, and it has much greater power than when the constraints are ignored.
\subsection{Related Work}\label{sec:related_work}

\paragraph{High-dimensional inference without constraint.} 
There is a vast literature on performing statistical inference for high dimensional models and here we provide a brief overview. Early work \cite{knight2000asymptotics} shows that the limiting distribution of LASSO estimator is not normal even in low dimensions. More recently, 
several approaches have been proposed to obtain asymptotic distribution on low dimensional parameters in high dimensional linear model, mostly by approximating the inverse of the Gram matrix. 
\cite{zhang2014confidence} gives confidence intervals for low dimensional parameters in a high dimensional linear model using low dimensional projection estimator (LDPE). 
\cite{javanmard2014confidence} provides asymptotic confidence interval of LASSO estimator for high dimensional linear regression by introducing the debiasing method. 
\cite{van2014asymptotically} further extends their work to a more general setting, including Generalized Linear Model and other nonlinear models. 
\cite{ning2017general} deals with general model on Hessian matrix with Dantzig type estimator. 
Related works also include \cite{dezeure2017high, zhang2017simultaneous} for simultaneous inference, \cite{belloni2014inference} for double selection method, \cite{ren2015asymptotic, yu2016statistical, yu2019simultaneous} for graphical model, \cite{taylor2014exact, yang2016selective, lee2016exact} for post selective inference, \cite{li2019statistical, cao2019estimation} for for synthetic control, \cite{song2019convex} for noisy labels, etc.




\paragraph{Low-dimensional constrained inference.} 
The literature on constrained testing dates back to \cite{chernoff1954distribution}, where the authors prove the asymptotic distribution of the likelihood ratio (LR) test statistic for constrained testing to be weighted Chi-square. 
\cite{perlman1969one} further considers testing with unknown covariance matrix, and gives sharp upper and lower bounds for the weights. 
\cite{gourieroux1982likelihood} introduces the test statistics for likelihood ratio test, Wald test, and Kuhn-Tucker test with inequality constraint in linear model, and proves the equivalence of these three tests. 
\cite{king1986joint} proposes one-sided $t$-test when the coefficients' signs are known. 
\cite{rogers1986modified} introduces a modified Lagrange multiplier test for testing one-sided problem. 
\cite{kodde1986wald} proposes Wald test for jointly testing equality and inequality constraints on the parameters.
\cite{wolak1989testing} develops asymptotically equivalent tests under linear inequality restrictions for linear models. \cite{king1997locally} introduces a locally most mean powerful (LMMP) test. 
\cite{andrews1998hypothesis} introduces directed tests, which is optimal in terms of power. 
\cite{bloch2001one} introduces multiple-endpoint testing in clinical trials. \cite{hall2001order} provides Order-Restricted Score Tests for generalized linear and nonlinear mixed models. \cite{andrews2001testing} proposes test when nuisance parameters appear under the alternative hypothesis, but not under the null. 
\cite{perlman2006some} gives improved LRT and UIT test. 
More recently, \cite{lu2013halfline} has discussed halfline test for inequality constraints. \cite{susko2013likelihood} gives conservative likelihood ratio test using data-dependent degree of freedom.  
\cite{zhu2014testing} gives Wald test under inequality constraint in linear model with spherically symmetric disturbance. 
\cite{lu2015extended} proposes an extended MaxT test and gets the power improvement. 
However, all these existing results are for low dimensional models.

In terms of statistical inference, our work is most related to \cite{ning2017general}, where the authors establish inference for high dimensional models using decorrelation method. We will review this method in Section \ref{sec:procedure}. For constrained testing, our work is most related to \cite{shapiro1988towards} where the authors introduce and discuss Chi-bar-squared statistic, and \cite{silvapulle1995score} and \cite{molenberghs2007likelihood} which form the one sided test to test whether a parameter is zero or strictly positive. 
Recent works \cite{javanmard2017flexible, wei2019geometry} consider hypothesis testing on whether the parameters lie in some convex cone. This is still different from our setting where we know the parameters lie in the convex cone and the goal is to test whether they lie on the boundary of the cone.

\subsection{Organization of the Paper}

In Section \ref{sec:procedure} we give the detailed procedure for our algorithm. Section \ref{sec:main_theorem} gives assumptions under which our method is valid, and states our main theorem. 
Sections \ref{sec:synthetic_data} and \ref{sec:real_data} present experimental results on synthetic datasets and real world datasets, respectively. We conclude in Section \ref{sec:conclusion}.

\section{Algorithm}\label{sec:procedure}

In this section we describe our main algorithm.
Consider a high dimensional statistical model with parameters $\bm \beta \in \mathbb R^p$ and the partition $\bm \beta = ( \bm\alpha , \bm \theta)$, where $\bm\alpha$ is $d$ dimensional parameter of interest, and $\bm \theta$ is a $p-d$ dimensional nuisance parameter with $d \ll p$. We write $\bm \alpha = (\alpha_1, ..., \alpha_d)$, and the true parameter as $\bm \beta^* = (\bm\alpha^* , \bm \theta^*)$ with $\| \bm \beta^* \|_0 = s$. Moreover, we have the constraint $\bm \alpha^* \in C$ where $C$ is a closed and convex cone. Let $M \in C$ be a linear space in $C$. 
In most of the cases $C$ is an polyhedron and the linear space $M$ denotes the (subset of) the boundary set of $C$. 
The hypothesis we want to test is 
\begin{equation}
\label{eq:main_hypothesis}
H_0 : \bm\alpha^* \in M \,\,\, \text{versus} \,\,\, H_A : \bm\alpha^* \in C\backslash M, 
\end{equation}
i.e. we want to test whether $\bm\alpha^*$ lies on the boundary of $C$, or is a strict interior point of $C$ in at least one direction. 
For example, with nonnegativity constraint we have $C = \RR^d_+ = \{ \bm\alpha: \bm\alpha \geq \bm 0\}$ and $M = \{ \bm\alpha: \bm\alpha = \bm 0\}$. The hypothesis we want to test is 
\begin{equation}
H_0: \bm\alpha^* = \bm 0 \quad\text{versus}\quad H_A: \exists j\in \{1,...,d\} \,\,\, \text{s.t.} \, \,\,\alpha_j^* > 0. 
\end{equation}

Another example is monotonic constraint where we have $C = \{ \bm\alpha: \alpha_1 \leq \alpha_2 \leq ... \leq \alpha_d\}$ and $M = \{ \bm\alpha: \alpha_1 = \alpha_2 = ... = \alpha_d\}$. The hypothesis we want to test is 
\begin{equation}
H_0: \alpha_1^* = \alpha_2^* = ... = \alpha_d^* \quad\text{versus}\quad H_A: \exists j\in \{1,...,d-1\} \,\,\, \text{s.t.} \, \,\, \alpha_j^* < \alpha_{j+1}^*.
\end{equation}



Suppose we have $n$ independent trials where we allow for $n < p$. Denote the sample negative log likelihood function as
\begin{equation}
\ell(\bm \beta) = - \frac{1}{n} \sum_{i=1}^n \log{\mathcal L_i(\bm \beta)},
\label{negative_likelihood}
\end{equation}
where $\mathcal L_i(\bm \beta)$ is the likelihood function for one trial $i$. In low dimensions we can estimate the parameter $\bm \beta$ by maximum likelihood estimation (MLE). However in high dimensions, MLE may not work. Instead we use the penalized estimator
\begin{equation}
\label{MLE}
\hat{\bm \beta} = \arg\min_{\bm \beta} \Big\{ \ell(\bm \beta) + P_{\lambda}(\bm \beta) \Big\},
\end{equation}
where $P_{\lambda}$ is some penalty function with tuning parameter $\lambda$. Note that this estimation can be performed with or without the cone constraint $\bm\alpha \in C$. In Section \ref{sec:main_theorem} we will see that all we need is the consistency of this estimator. 

Let $\nabla \ell(\bm \beta) = \nabla \ell(\bm{\alpha} , \bm \theta)$ be the gradient of the negative log likelihood function and $\nabla \ell_{\bm{\alpha}}(\bm{\alpha} , \bm \theta)$, $\nabla \ell_{\bm \theta}(\bm{\alpha} , \bm \theta)$ be the corresponding partitions. Similarly let $\nabla^2\ell(\bm \beta)$ be the sample Hessian matrix, and let $\nabla^2_{\bm{\alpha}\bm{\alpha}}\ell(\bm \beta)$, $\nabla^2_{\bm{\alpha}\bm \theta}\ell(\bm \beta)$, $\nabla^2_{\bm{\theta}\bm \alpha}\ell(\bm \beta)$ and $\nabla^2_{\bm \theta\bm \theta}\ell(\bm \beta)$ be the corresponding partitions. 
Let $H(\bm \beta)=\mathbb{E} (\nabla^2\ell(\bm \beta))$ be the population Fisher information matrix. 
Denote $H^* = H(\bm \beta^*)$ and $H^*_{\bm{\alpha}\bm{\alpha}}$, $H^*_{\bm{\alpha}\bm \theta}$, $H^*_{\bm \theta\bm \alpha}$, $H^*_{\bm \theta\bm \theta}$ as the corresponding partitions for $H^*$.


\vspace{1mm}
The difficulty of the problem comes from two aspects: the problem is high dimensional, and that we have the constraint on $\bm{\alpha}$. 
We first deal with the difficulty from high dimensions. It is well known that in low dimensions we can test for $H_0: \bm{\alpha}^* = 0$ based on the partial score function
\begin{equation}
S(\bm{\alpha}) = \nabla_{\bm{\alpha}}(\bm{\alpha}, \hat{\bm \theta}(\bm{\alpha})),
\end{equation}
where $\hat{\bm \theta}(\bm{\alpha}) = \text{argmin}_{\bm \theta}\ell(\bm{\alpha},\bm \theta)$ is the partial maximum likelihood estimator. Under the null hypothesis we have 
\begin{equation}
\sqrt{n}S(\bm 0) \overset{d}{\to} N(\bm 0, H^*_{\bm{\alpha}|\bm{\theta}}),
\end{equation}
where $H^*_{\bm{\alpha}|\bm{\theta}} = H^*_{\bm{\alpha}\bm{\alpha}} - H^*_{\bm{\alpha}\bm \theta} {H^*_{\bm \theta\bm \theta}}^{-1} H^*_{\bm \theta\bm \alpha}$ is the partial information matrix. We then reject the null when $S(\bm 0)$ is relatively large. However, in high dimensions this method does not work. To overcome this issue, we follow the decorrelation procedure introduced in \cite{fang2017testing, ning2017general} as described in Step 1 in Algorithm \ref{algo:decorrelation_constrained}.

\begin{algorithm}
   \caption{Two-step procedure for statistical inference with cone constraint}
   \label{algo:decorrelation_constrained}
   
   \vspace{3mm}
   {\bf{Step 1}}
   
\begin{enumerate}
\item[1.1] Get penalized estimator $\hat{\bm \beta} = (\hat{\bm \alpha} , \hat{\bm \theta})$ using \eqref{MLE} for some tuning parameter $\lambda$.
\item[1.2]
For each $j=1, ..., d$, estimate $\bm{\hat w}_j$ by the following Dantzig selector
\begin{equation}
\label{step1:dantzig}
\hat{\bm w}_j = \text{argmin}_{\bm w} \|\bm w\|_1 
\, \text{  s.t.  } \, \big\| \nabla^2_{\alpha_j\bm \theta}\ell(\hat{\bm \beta}) - \bm w^\top  \nabla^2_{\bm \theta\bm \theta}\ell(\widehat{\bm \beta}) \big\|_{\infty} \leq \lambda', 
\end{equation}
where $\lambda'$ is a hyper-parameter which we describe how to choose later. Combine them to get matrix $\hat{\bm{W}}$, i.e., $\hat{\bm{W}} = (\hat{\bm w}_1, ... , \hat{\bm w}_d)$.
\item[1.3] Define the decorrelated score function:
\begin{equation}
\label{step1:score}
\hat{\bm U}(\bm \alpha) = \nabla_{\bm \alpha}\ell(\bm \alpha, \hat{\bm \theta}) - \hat{\bm{W}}^\top  \nabla_{\bm \theta}\ell(\bm \alpha, \hat{\bm \theta}).
\end{equation}
\item[1.4] Define the decorrelated estimator:
\begin{equation}
\label{step1:wald}
\tilde{\bm \alpha} = \hat{\bm \alpha} - \Big(\frac{\partial\hat{\bm U}(\hat{\bm \alpha})}{\partial \bm \alpha}\Big)^{-1} \cdot \hat{\bm U}(\hat{\bm \alpha}).
\end{equation}
\item[1.5] Define the decorrelated likelihood function:
\begin{equation}
\label{step1:likeli}
\ell_{\text{de}}(\bm \alpha) = \ell\big(\bm \alpha,\bm{\hat \theta} - \bm{\hat W}(\bm \alpha - \hat{\bm \alpha}) \big).
\end{equation}
\end{enumerate}
   
   {{\bf Step 2}}
   
\begin{enumerate}
\item[2.1] Get one-sided Wald test statistic
\begin{equation}
T_w = \inf_{\bm b \in M}\Big\{ (\tilde{\bm \alpha}-\bm b)^\top  \hat H_{\bm \alpha|\bm \theta} (\tilde{\bm \alpha}-\bm b)\Big\} - \inf_{\bm b \in C}\Big\{ (\tilde{\bm \alpha}-\bm b)^\top  \hat H_{\bm \alpha|\bm \theta} (\tilde{\bm \alpha}-\bm b)\Big\}.
\end{equation}
\item[2.2] Get one-sided Likelihood ratio test statistic
\begin{equation}
T_L = 2n\Big(\inf_{\bm b \in M} \ell_{\text{de}}(\bm b) - \inf_{\bm b \in C} \ell_{\text{de}}(\bm b)\Big). 
\end{equation}
\item[2.3] Get one-sided Score test statistic
\begin{equation}
T_s = \Big( \hat{\bm U}(\bm b_M) - \hat{\bm U}(\bm b_C) \Big)^\top  \hat H_{\bm \alpha | \bm \theta}^{-1} \Big( \hat{\bm U}(\bm b_M) - \hat{\bm U}(\bm b_C) \Big), 
\end{equation}
where 
\begin{equation}
\bm b_M = \arg\inf\limits_{\bm b \in M} \ell_{\text{de}}(\bm b), \,\,\, \text{and} \,\,\, \bm b_C = \arg\inf\limits_{\bm b \in C} \ell_{\text{de}}(\bm b).
\end{equation}

\end{enumerate}
\end{algorithm}


\begin{remark}
In Step 1.2, we want to get a linear combination of $\nabla_{\bm \theta}\ell(\bm{\alpha^*}, \bm \theta^*)$ to best approximate $\nabla_{\alpha}\ell(\bm{\alpha^*}, \bm \theta^*)$. The population version of this vector should be 
\begin{equation}
\begin{aligned}
\bm W^* = \mathop{\text{argmin}}\limits_{\bm W} \, \mathbb{E} \Big\{  \nabla_{\alpha}\ell(\bm{\alpha^*}, \bm \theta^*) - {\bm W}^\top  \nabla_{\bm \theta}\ell(\bm{\alpha^*}, \bm \theta^*)  \Big\}^2 = {H^{*-1}_{\bm \theta\bm \theta}}{H^*_{\bm\theta \bm\alpha}}.
\end{aligned}
\end{equation}
However, in high dimensions, we cannot directly estimate $\bm W^*$ by the corresponding sample version since the problem is ill-conditioned. So instead we estimate $\bm W^*$ by the Dantzig selector $\hat{\bm W}$. 


\end{remark}

\begin{remark}
In Step 1.3 we get decorrelated score function which is 
approximately orthogonal to any component of the nuisance score function $\nabla_{\bm \theta} \ell(\bm{\alpha^*},\bm \theta^*)$.
This is approximately an unbiased estimating equation for $\bm \alpha$ so the root of this equation should give us an approximately unbiased estimator for $\bm \alpha^*$. Since searching for the root may be computational intensive, we use one Newton step, as stated in \eqref{step1:wald}.
\end{remark}

With the decorrelated score function, the decorrelated estimator, and the decorrelated likelihood function, under mild conditions we will specify in Section \ref{sec:main_theorem}, 
we have the following asymptotic distributions \cite{ning2017general}:
\begin{gather}
\sqrt{n}\hat{\bm U}(\bm \alpha^*) {\to} N(\bm 0, H^*_{\bm \alpha | \bm \theta}),  \\
\sqrt{n} (\tilde{\bm \alpha} - \bm \alpha^*) {\to} N(\bm 0, H_{\bm \alpha | \bm \theta}^{*-1}),  \\
2n\Big( \ell_{\text{de}}(\bm{\alpha^*}) - \ell_{\text{de}}(\bm{\tilde\alpha})  \Big) {\to} \chi^2_d, 
\end{gather}
where $H_{\bm \alpha | \bm \theta}^* = H_{\bm \alpha\bm \alpha}^* - H_{\bm \alpha\bm \theta}^*{H_{\bm \theta\bm \theta}^{*-1}}H_{\bm \theta\bm \alpha}^*$, and in practice it can be estimated by 
\begin{equation}
\hat H_{\bm\alpha| \bm \theta} = \nabla^2_{\bm\alpha\bm\alpha}\ell(\hat{\bm \beta}) - \hat{\bm W}^\top  \nabla^2_{\bm \theta\bm\alpha}\ell(\hat{\bm \beta}).
\end{equation}

We then deal with the second difficulty: cone constraint. Since we already get asymptotic normality, we follow the procedure in \cite{shapiro1988towards} to construct the Score, Wald and likelihood ratio test statistics, as described in Step 2 in Algorithm \ref{algo:decorrelation_constrained}.

This two-step procedure gives us the final test statistics $T_s$, $T_w$ and $T_L$. In the next section we will show that under null hypothesis, all of them converge weakly to the weighted Chi-square distribution, and from which we can construct valid hypothesis test with asymptotic designed Type I error.

\section{Theoretical result}
\label{sec:main_theorem}

In this section, we outline the main theoretical properties of our method. We start by providing
high-level conditions in Section \ref{sec:assumption}, and state our main theorem in Section \ref{sec:theoretical} that the null distribution is a weighted Chi-square distribution. In Section \ref{sec:weights} we describe the way to calculate the weights. We analyze the power of our method in Section \ref{sec:power} and the proof of the main theorem is given in Section \ref{sec:proof}.

\subsection{Assumptions}
\label{sec:assumption}

In this section we provide high-level assumptions that allow us to establish properties of each step in our procedure.

\paragraph{Sparsity Condition:} Both $\bm \beta^*$ and $ \bm w^*$ are sparse: $\| \bm \beta^* \|_0 = \| \bm w^* \|_0 = s$. (We use a single $s$ for notational simplicity, but this is not required for our method to work).

\paragraph{Score Condition:} The expected value of the score function at true $\bm{\beta^*}$ is 0: 
\begin{equation}
\mathbb{E}\Big( \nabla \ell(\bm \beta^*)\Big) = 0.
\end{equation}

\paragraph{Sparse Eigenvalue Condition:} We have
$\bm v^\top H^*\bm v \geq c_{\min}\|\bm v\|_2^2$ and $\bm v^\top \nabla^2\ell(\hat{\bm \beta})\bm v \geq c_{\min}\|\bm v\|_2^2$
for any $\bm v$ with $\|\bm v\|_0 = \mathcal O(s)$. 
Also both $\nabla\ell(\bm\beta^*)$, $\nabla^2\ell(\bm\beta^*)$, and $H^*$ are bounded element-wise, i.e., the maximum element is $\mathcal{O}(1)$ and each element has absolute value bounded by some constant $a$.

Denote $\| A \|_{\infty}$ as the maximum absolute value of elements in $A$, i.e., $\| A \|_{\infty} = \max_{j,k}{|A_{jk}|}$. By saying the maximum element of $H^*$ is $\mathcal{O}(1)$, we are assuming $\|H^*\|_{\infty} = \mathcal O(1)$.


\paragraph{Estimation Accuracy Condition:} The penalized estimator $\bm{\hat \beta}$ in \eqref{MLE} is a consistent estimator for the true $\bm{\beta^*}$:
\begin{equation}
\| \hat{\bm \beta} - \bm \beta^* \|_1 = \mathcal{O}(\lambda s) \,\,\, \text{and} \,\,\, \| \hat{\bm \beta} - \bm \beta^* \|_2 = \mathcal{O}(\lambda \sqrt s),
\end{equation}
where $\lambda$ is the hyper-parameter in the penalty $P_{\lambda}$.

\paragraph{Smooth Hessian Condition:} The Hessian matrix $\nabla^2 \ell(\bm \beta)$ is Lipschitz continuous:
\begin{equation}
\|\nabla^2 \ell(\bm \beta_1)-\nabla^2 \ell(\bm \beta_2)\|_{\infty} \leq L \cdot \|\bm \beta_1-\bm \beta_2\|_1, 
\end{equation}
for some constant $L$.

\begin{remark}
The score condition holds for most of the log likelihood functions. In fact, let $f$ be the likelihood function and $\theta$ be the parameter, then under certain regularity conditions \cite{inference}, we have
\begin{equation}
\begin{aligned}
\mathbb{E} \frac{d}{d\theta} \log{f} &= \mathbb{E} \frac{df}{d\theta}\cdot \frac{1}{f} =  \int \frac{df}{d\theta}\cdot \frac{1}{f} \cdot f\,dx = \int \frac{df}{d\theta} \,dx = \frac{d}{d\theta} \int f \, dx = \frac{d}{d\theta} 1 =0.
\end{aligned}
\end{equation}
\end{remark}

\begin{remark}
\label{SE_RE}
The sparse eigenvalue (SE) condition can be replaced by restricted eigenvalue (RE) condition: let $\mathcal S = supp(\bm{\beta^*}) \cup supp(\bm{w^*})$, RE condition requires 
$\bm v^\top H^*\bm v \geq c_{\min}\|\bm v\|_2^2$ and $\bm v^\top \nabla^2\ell(\hat{\bm \beta})\bm v \geq c_{\min}\|\bm v\|_2^2$
for any $\bm v$ in the cone $\mathcal C(\mathcal S) = \{ \bm v: \|\bm v_{\mathcal S^c}\| \leq c_0 \|\bm v_{\mathcal S}\|\}$ for some $c_{\min}, c_0 > 0$. Both sparse eigenvalue condition and restricted eigenvalue condition are common in high dimensional statistical estimation literature, and are known to hold for a large number of models. See Remark \ref{RE_OK} in the supplementary material for the proof.
\end{remark}

\begin{remark}
The estimation condition is also common for penalized estimators. For example, \cite{negahban2009unified} shows that, if the sample loss function $\mathcal L$ (e.g. negative log likelihood function $\ell(\bm \beta)$ here) is convex, differentiable, and satisfies Restricted Strong Convexity:
\begin{equation}
\mathcal L (\bm \beta^* + \Delta) - \mathcal L(\bm \beta^*) - \langle \nabla \mathcal L(\bm \beta^*), \Delta \rangle \geq \kappa\|\Delta\|^2
\end{equation}
for certain $\Delta$, then for $P_{\lambda}$ being $L_1$ penalty, with $\lambda \geq 2\| \nabla \mathcal L(\bm \beta^*) \|_{\infty}$ we have
\begin{equation}
\| \hat{\bm \beta} - \bm \beta^* \|_1 = \mathcal{O}(\lambda s) \,\,\, \text{and} \,\,\, \| \hat{\bm \beta} - \bm \beta^* \|_2 = \mathcal{O}(\lambda \sqrt s).
\end{equation}
\end{remark}

\begin{remark}
The smooth Hessian condition is to make sure the Hessian matrix is well-behaved locally, hence to make sure the Dantzig selector $\hat{\bm w}$ is consistent. This condition is also known to hold for general models.
\end{remark}
%

\subsection{Main theorem}
\label{sec:theoretical}

Before we proceed with our main theorem, we first introduce the following Lemma \ref{lemma_normal} which shows the asymptotic distribution of the decorrelated score function and decorrelated estimator constructed in Step 1 of Algorithm \ref{algo:decorrelation_constrained}. 
It is in the same spirit as and corresponds to Theorem 4.4 and 4.7 in \cite{fang2017testing}. 
All the other related lemmas and proofs are provided in the supplementary material.
For ease of presentation, in the following Lemma \ref{lemma_normal} we focus on the case where $\alpha$ is a scalar. 
It is straightforward to generalize to the vector case.

\begin{lemma}
\label{lemma_normal}
Suppose all the conditions in Section \ref{sec:assumption} are satisfied. Let $\lambda = \mathcal{O}( \sqrt{\log{p}/{n}})$ in Step 1.1, $\lambda' = \mathcal{O} ( s^2\sqrt{\log{p}/{n}} )$ in Step 1.2, and $s^6\log^2p/n = o(1)$, we have
\begin{gather}
\sqrt{n}\hat{ U}( \alpha^*) {\to} N( 0, H^*_{ \alpha | \bm \theta}),  \label{lemma:U_alpha}\\
\sqrt{n} (\tilde{ \alpha} -  \alpha^*) {\to} N( 0, H_{ \alpha | \bm \theta}^{*-1}), \label{lemma:alpha}  \\
\left| H^*_{\alpha | \bm \theta} - \hat H_{\alpha | \bm \theta} \right| = o_{\PP}(1) \label{lemma:H_consistent}
\end{gather}
where $H_{ \alpha | \bm \theta}^* = H_{ \alpha \alpha}^* - H_{ \alpha\bm \theta}^*{H_{\bm \theta\bm \theta}^{*-1}}H_{\bm \theta \alpha}^*$ and is estimated by the sample version
\begin{equation}
\label{eq:H_sample}
\hat H_{\alpha| \bm \theta} = \nabla^2_{\alpha\alpha}\ell(\hat{\bm \beta}) - \hat{\bm w}^\top  \nabla^2_{\bm \theta\alpha}\ell(\hat{\bm \beta}).
\end{equation}
\end{lemma}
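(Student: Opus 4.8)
The plan is to establish the three claims by expanding the relevant quantities around the true parameter $\bm\beta^*$ and controlling the remainder terms using the high-level assumptions, following the decorrelation analysis of \cite{ning2017general} and the corresponding results in \cite{fang2017testing}. The starting point for \eqref{lemma:U_alpha} is to write
\[
\sqrt n\, \hat U(\alpha^*) = \sqrt n\Big(\nabla_\alpha \ell(\alpha^*, \hat{\bm\theta}) - \hat{\bm w}^\top \nabla_{\bm\theta}\ell(\alpha^*, \hat{\bm\theta})\Big),
\]
and then do a Taylor expansion in $\hat{\bm\theta}$ around $\bm\theta^*$. The leading term is $\sqrt n\big(\nabla_\alpha\ell(\bm\beta^*) - \bm w^{*\top}\nabla_{\bm\theta}\ell(\bm\beta^*)\big)$, which by the Score Condition has mean zero and, by a CLT, converges to $N(0, H^*_{\alpha|\bm\theta})$ — the variance collapses to the partial information because $\bm w^* = H_{\bm\theta\bm\theta}^{*-1}H_{\bm\theta\alpha}^*$. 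The remaining terms must be shown to be $o_\PP(1)$: this is where the rate conditions enter. I would split the error into (i) a term involving $(\hat{\bm w} - \bm w^*)^\top \nabla_{\bm\theta}\ell(\bm\beta^*)$, bounded by $\|\hat{\bm w} - \bm w^*\|_1 \|\nabla_{\bm\theta}\ell(\bm\beta^*)\|_\infty$; (ii) a term involving $\big(\nabla^2_{\alpha\bm\theta}\ell - \bm w^{*\top}\nabla^2_{\bm\theta\bm\theta}\ell\big)(\hat{\bm\theta} - \bm\theta^*)$, which is small because the population analogue of the bracket is zero and the Dantzig constraint controls its sample version; and (iii) second-order Taylor remainders controlled by the Smooth Hessian Condition and $\|\hat{\bm\beta} - \bm\beta^*\|_1 = \mathcal O(\lambda s) = \mathcal O(s\sqrt{\log p/n})$. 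Collecting these, the dominant error is of order $s^3 \log p / \sqrt n$ up to logs, which is $o(1)$ precisely under $s^6 \log^2 p / n = o(1)$.

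For \eqref{lemma:alpha}, I would start from the definition $\tilde\alpha = \hat\alpha - \big(\partial \hat U(\hat\alpha)/\partial\alpha\big)^{-1}\hat U(\hat\alpha)$ and Taylor-expand $\hat U(\hat\alpha)$ around $\alpha^*$: $\hat U(\hat\alpha) = \hat U(\alpha^*) + \big(\partial\hat U/\partial\alpha\big)(\hat\alpha - \alpha^*) + \text{remainder}$. Substituting gives $\sqrt n(\tilde\alpha - \alpha^*) = -\big(\partial\hat U(\hat\alpha)/\partial\alpha\big)^{-1}\sqrt n\,\hat U(\alpha^*) + \sqrt n\cdot(\text{remainder})$. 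Then I would show $\partial\hat U(\hat\alpha)/\partial\alpha \to H^*_{\alpha|\bm\theta}$ in probability — this is essentially claim \eqref{lemma:H_consistent} — and that the remainder, again governed by the Smooth Hessian Condition and the estimation rates, is $o_\PP(n^{-1/2})$. Combined with \eqref{lemma:U_alpha} and Slutsky's theorem, this yields $\sqrt n(\tilde\alpha - \alpha^*) \to N(0, H^{*-1}_{\alpha|\bm\theta})$. Claim \eqref{lemma:H_consistent} itself follows by writing $\hat H_{\alpha|\bm\theta} - H^*_{\alpha|\bm\theta}$ as the sum of $\big(\nabla^2_{\alpha\alpha}\ell(\hat{\bm\beta}) - H^*_{\alpha\alpha}\big)$, $\big(\hat{\bm w} - \bm w^*\big)^\top \nabla^2_{\bm\theta\alpha}\ell(\hat{\bm\beta})$, and $\bm w^{*\top}\big(\nabla^2_{\bm\theta\alpha}\ell(\hat{\bm\beta}) - H^*_{\bm\theta\alpha}\big)$, each controlled by the Sparse Eigenvalue / boundedness conditions, the Smooth Hessian Condition, and consistency of $\hat{\bm w}$ in $\ell_1$.

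The main obstacle is the consistency rate for the Dantzig selector $\hat{\bm w}$ — specifically showing $\|\hat{\bm w} - \bm w^*\|_1 = \mathcal O(\lambda' s) = \mathcal O(s^3\sqrt{\log p / n})$ with the stated choice of $\lambda'$. This requires verifying that the true $\bm w^*$ is feasible for the sample Dantzig program \eqref{step1:dantzig} with high probability, which in turn needs a bound on $\big\|\nabla^2_{\alpha\bm\theta}\ell(\hat{\bm\beta}) - \bm w^{*\top}\nabla^2_{\bm\theta\bm\theta}\ell(\hat{\bm\beta})\big\|_\infty$: one bounds the deviation from the population Hessian (via concentration plus the Smooth Hessian Condition applied to $\hat{\bm\beta} - \bm\beta^*$) and uses that the population bracket vanishes. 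Then a standard restricted-eigenvalue / cone argument converts feasibility into the $\ell_1$ error bound. Propagating this $s^3$-type rate through the score and Wald expansions is what forces the $s^6 \log^2 p / n = o(1)$ scaling, so careful bookkeeping of which term contributes which power of $s$ is the delicate part; everything else is routine Taylor expansion and Slutsky-type arguments. I would handle the scalar case in detail and remark that the vector case follows by applying the same argument coordinatewise to each $\hat{\bm w}_j$ and assembling $\hat{\bm W}$.
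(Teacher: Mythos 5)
Your proposal is correct and follows essentially the same route as the paper's proof: the same Taylor/mean-value expansion of $\hat U(\alpha^*)$ into a CLT leading term plus error terms controlled via H\"older's inequality, the Dantzig feasibility of $\bm w^*$, and the Smooth Hessian Condition; the same three-term decomposition for $\hat H_{\alpha|\bm\theta} - H^*_{\alpha|\bm\theta}$; and the same Newton-step expansion plus Slutsky argument for $\tilde\alpha$, with identical rate bookkeeping yielding the $s^6\log^2 p/n = o(1)$ requirement.
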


\begin{proof}
The outline of the proof follows from \cite{fang2017testing}. 
We start from the proof of \eqref{lemma:U_alpha} for $\hat U(\alpha^*)$ where by mean value theorem we have:
\begin{equation}
\begin{aligned}
\hat U(\alpha^*) &=  \nabla_{\alpha}\ell(\alpha^*, \hat{\bm  \theta}) - \hat{\bm  w}^\top  \nabla_{\bm \theta}\ell(\alpha^*, \hat{\bm  \theta}) \\
&= \nabla_{\alpha}\ell(\alpha^* , {\bm  \theta}^*) + \nabla^2_{\alpha \bm \theta} \ell(\alpha^* , \bar{\bm \theta})(\hat{\bm \theta} - {\bm \theta}^*) - [ \hat{\bm  w}^\top  \nabla_{\bm \theta}\ell(\alpha^* , {\bm  \theta}^*) + \hat{\bm  w}^\top \nabla^2_{\bm \theta \bm \theta} \ell(\alpha^* , \tilde{\bm \theta})(\hat{\bm \theta} - {\bm \theta}^*)  ] \\
&= \underbrace{\big[\nabla_{\alpha}\ell(\alpha^* , {\bm  \theta}^*) - {\bm  w}^{*T} \nabla_{\bm \theta}\ell(\alpha^* , {\bm  \theta}^*)\big]  }_{E_1}
+ \underbrace{\big[({\bm  w}^{*} - \hat{\bm w})^\top  \nabla_{\bm \theta}\ell(\alpha^* , {\bm  \theta}^*)\big]}_{E_2}\\
&\qquad\qquad\qquad\qquad\qquad\qquad + \underbrace{\big[\nabla^2_{\alpha \bm \theta} \ell(\alpha^* , \tilde{\bm \theta}) - \hat{\bm  w}^\top \nabla^2_{\bm \theta \bm \theta} \ell(\alpha^* , \bar{\bm \theta})\big](\hat{\bm \theta} - {\bm \theta}^*)}_{E_3} \\
&= E_1 + E_2 + E_3,
\end{aligned}
\end{equation}
where $\bar{\bm \theta} = {\bm \theta}^* + \bar u(\hat{\bm \theta} - {\bm \theta}^*), \tilde{\bm \theta} = {\bm \theta}^* + \tilde u(\hat{\bm \theta} - {\bm \theta}^*)$ for some $\bar u, \tilde u \in [0,1]$.

We consider the three terms separately. For $E_1$, by taking $\bm v = (1; -{\bm w}^*)$ in Lemma \ref{lemma:CLT}, under the null hypothesis we have
\begin{equation}
\sqrt n E_1 {\to}  N(0, H_{\alpha | \bm \theta}^*).
\end{equation}
For $E_2$, according to H\"{o}lder's inequality, Lemma \ref{lemma:grad_true}, and Lemma \ref{lemma:diff_w} we have
\begin{equation}
\begin{aligned}
|E_2| \leq \|\widehat{\bm w} - \bm w^*\|_1 \cdot \|\nabla_{\bm \theta}\ell(0, {\bm  \theta}^*)\|_{\infty} = \mathcal O_{\mathbb P}(\lambda's\sqrt{\log p/n}) = \mathcal{O}_\PP({s }^{3} \log p/n).
\end{aligned}
\end{equation}
For $E_3$ we have
\begin{equation}
\begin{aligned}
|E_3| &\leq \underbrace{|  (\hat {\bm w} - {\bm w}^*) ^\top \nabla^2_{{\bm \theta}{\bm \theta}} \ell(\alpha^*, \bar{\bm \theta})(\hat{\bm \theta} - {\bm \theta}^*) |}_{R_1} + 
\big| [\nabla^2_{\alpha{\bm \theta}} \ell(\alpha^*, \tilde{\bm \theta}) - {{\bm w}^*}^\top \nabla^2_{{\bm \theta}{\bm \theta}} \ell(\alpha^*, \bar{\bm \theta})](\hat{\bm \theta} - {\bm \theta}^*) \big| \\
&\leq R_1 + \underbrace{\big| [\nabla^2_{\alpha{\bm \theta}} \ell(\alpha^*, \tilde{\bm \theta}) - H^*_{\alpha{\bm \theta}}](\hat{\bm \theta} - {\bm \theta}^*) \big|}_{R_2} + 
\big| [H^*_{\alpha{\bm \theta}} - {{\bm w}^*}^\top \nabla^2_{{\bm \theta}{\bm \theta}} \ell(\alpha^*, \bar{\bm \theta})](\hat{\bm \theta} - {\bm \theta}^*) \big| \\
&\leq R_1 + R_2 + \underbrace{ \big| {{\bm w}^*}^\top [H^*_{{\bm \theta}{\bm \theta}} -  \nabla^2_{{\bm \theta}{\bm \theta}} \ell(\alpha^*, \bar{\bm \theta})](\hat{\bm \theta} - {\bm \theta}^*) \big| }_{R_3}\\
&\leq R_1 + R_2 + R_3.
\end{aligned}
\end{equation}

Considering the three terms $R_1, R_2$ and $R_3$ separately, according to Lemma \ref{lemma:H} and Lemma \ref{lemma:diff_w} we have
\begin{gather}
R_1 \leq \|\hat {\bm w} - {\bm w}^*\|_2 \| \nabla_{{\bm \theta}{\bm \theta}} \ell (\alpha^*, \bar{\bm \theta})\|_2 \|\hat{\bm \theta} - {\bm \theta}^*\|_2 \leq C\sqrt{s } \lambda'  c  \sqrt{s \log p/n} = \mathcal{O}_\PP({s }^3 \log p/n),  \\
R_2 \leq \| \nabla^2_{\alpha{\bm \theta}} \ell(\alpha^*, \tilde{\bm \theta}) - H^*_{\alpha{\bm \theta}} \|_{\infty} \cdot \|\hat{\bm \theta} - {\bm \theta}^*\|_1   = \mathcal{O}_\PP({s}^2 \log p/n),  \\
R_3 \leq \|{{\bm w}^*}\|_1 \| H^*_{{\bm \theta}{\bm \theta}} -  \nabla^2_{{\bm \theta}{\bm \theta}} \ell(\alpha^*, \bar{\bm \theta}) \|_{\infty} \|\hat{\bm \theta} - {\bm \theta}^*\|_1 = \mathcal{O}_\PP({s }^{3} \log p/n). 
\end{gather}

Combining all these terms we show that \eqref{lemma:U_alpha} holds. 
We then turn to the proof of \eqref{lemma:H_consistent} that $\hat H_{\alpha | \bm \theta}$ is an consistent estimator. By definition we have
\begin{equation}
\begin{aligned}
| \hat H_{\alpha | \bm \theta} -  H_{\alpha | \bm \theta}^* | &\leq \underbrace{ | H^*_{\alpha\alpha} - \nabla^2_{\alpha\alpha} \ell(\hat \alpha, \hat{\bm \theta}) | }_{T_1} + | H_{ \alpha\bm \theta}^*{H_{\bm \theta\bm \theta}^{*-1}}H_{\bm \theta \alpha}^* - \hat{\bm  w}^\top \nabla^2_{\bm \theta  \alpha} \ell(\hat\alpha , \hat{\bm \theta}) | \\
& \leq T_1 + \underbrace{|(\bm w^* - \bm{\hat w})^\top H_{\bm \theta \alpha}^* |}_{T_2} 
+ \underbrace{ | \hat{\bm  w}^\top ( H_{\bm \theta \alpha}^* - \nabla^2_{\bm \theta  \alpha} \ell(\hat\alpha , \hat{\bm \theta})) | }_{T_3} \\
& \leq T_1 + T_2 + T_3.
\end{aligned}
\end{equation}

Considering the terms $T_1, T_2$ and $T_3$ separately, according to Lemma \ref{lemma:H} and Lemma \ref{lemma:diff_w} we have
\begin{gather}
T_1 = \mathcal{O}_\PP({s } \sqrt{\log p/n}),  \\
T_2 \leq \|\bm w^* - \bm{\hat w}\|_{1} \cdot \|H_{\bm \theta \alpha }^*\|_{\infty}  = \mathcal{O}_\PP({s}^3 \sqrt{\log p/n}),  \\
T_3 \leq \| \hat{\bm  w} \|_{1} \cdot \| H_{\bm \theta \alpha }^* - \nabla^2_{\bm \theta  \alpha } \ell(\hat\alpha , {\hat \theta}) \|_{\infty} = \mathcal{O}_\PP({s}^2 \sqrt{\log p/n}). 
\end{gather}

Combining the three terms we show that \eqref{lemma:H_consistent} holds. Finally we prove the result \eqref{lemma:alpha} for $\tilde{ \alpha }$. By construction we have
\begin{equation}
\begin{aligned}
\label{eq:alpha_tilde_bound}
\tilde{ \alpha } &= \hat{ \alpha } - \Big(\frac{\partial\hat{U}(\hat{ \alpha })}{\partial  \alpha }\Big)^{-1} \cdot \hat{U}(\hat{ \alpha }) 
= \hat{ \alpha } - H^{*-1}_{\alpha | \bm \theta} \hat{U}(\hat{ \alpha }) + 
\underbrace{\hat{U}(\hat{ \alpha }) \Big[ H^{*-1}_{\alpha | \bm \theta} - \Big(\frac{\partial\hat{U}(\hat{ \alpha })}{\partial  \alpha }\Big)^{-1} \Big]}_{S_1} \\
&= \hat{ \alpha } - H^{*-1}_{\alpha | \bm \theta} \Big[ \hat{U}({ \alpha ^*}) + (\hat{ \alpha } - { \alpha ^*}) \cdot \frac{\partial\hat{U}(\breve{ \alpha })}{\partial  \alpha }  \Big] + S_1 \\
&= \hat{ \alpha } - H^{*-1}_{\alpha | \bm \theta} \hat{U}({ \alpha ^*}) - (\hat{ \alpha } - { \alpha ^*}) H^{*-1}_{\alpha | \bm \theta} \cdot H^{*}_{\alpha | \bm \theta} 
+ \underbrace{ (\hat{ \alpha } - { \alpha ^*}) H^{*-1}_{\alpha | \bm \theta} \Big[ H^{*}_{\alpha | \bm \theta} - \Big(\frac{\partial\hat{U}(\breve{ \alpha })}{\partial  \alpha }\Big) \Big] }_{S_2} + S_1 \\
&= { \alpha ^*} - H^{*-1}_{\alpha | \bm \theta} \hat{U}({ \alpha ^*}) + S_1 + S_2,
\end{aligned}
\end{equation}
where $\breve{ \alpha } = { \alpha ^*} + \breve u (\hat{ \alpha }  - { \alpha ^*})$ for some $\breve u \in [0,1]$.
We consider the terms $S_1$ and $S_2$ separately. For $S_1$ we have
\begin{equation}
| \hat{U}({ \alpha ^*}) - \hat{U}(\hat{ \alpha }) | \leq | { \alpha ^*} - \hat{ \alpha } | \cdot \Big|\frac{\partial\hat{U}(\breve{ \alpha })}{\partial  \alpha } \Big| = \mathcal{O}_\PP( \lambda ).
\end{equation}

Moreover, from the analysis of $\hat{U}({ \alpha ^*})$ above we have that $|\hat{U}({ \alpha ^*})| = \mathcal{O}_\PP(n^{-1/2}) $. We then obtain
\begin{equation}
\label{eq:S_1}
|S_1| \leq \Big( | \hat{U}({ \alpha ^*}) - \hat{U}(\hat{ \alpha }) | + |\hat{U}({ \alpha ^*})| \Big) \cdot 
\Big[ H^{*-1}_{\alpha | \bm \theta} - \Big(\frac{\partial\hat{U}(\hat{ \alpha })}{\partial  \alpha }\Big)^{-1} \Big]
\leq \mathcal{O}_\PP({s}^3 {\log p/n}).
\end{equation}

For $S_2$ we have that
\begin{equation}
\label{eq:S_2}
|S_2| \leq | \hat{ \alpha } - { \alpha ^*} | \cdot H^{*-1}_{\alpha | \bm \theta} \cdot \Big| H^{*}_{\alpha | \bm \theta} - \Big(\frac{\partial\hat{U}(\breve{ \alpha })}{\partial  \alpha }\Big) \Big| \leq \mathcal{O}_\PP({s}^3 {\log p/n}).
\end{equation}

Plugging in \eqref{eq:S_1} and \eqref{eq:S_2} into \eqref{eq:alpha_tilde_bound} we obtain
\begin{equation}
\sqrt{n} (\tilde{ \alpha } -  \alpha ^*) = - \sqrt{n} H^{*-1}_{\alpha | \bm \theta} \hat{U}({ \alpha ^*}) + o_\PP(1).
\end{equation}

According to \eqref{lemma:U_alpha}, this gives
\begin{equation}
\sqrt{n} (\tilde{ \alpha } -  \alpha ^*) {\to} N( 0, H_{ \alpha  | \bm \theta}^{*-1}),
\end{equation}
and our claim \eqref{lemma:alpha} holds.
\end{proof}

%
%

\begin{remark}
The stated sample complexity $s^6\log^2p/n = o(1)$ is for a general model. For specific models we may be able to get sharper results. 
For example for linear model and generalized linear model $s^2\log^2p/n = o(1)$ suffices \cite{ning2017general}. 
\end{remark}

%

\vspace{3mm}
In Lemma \ref{lemma_normal} we focus on the case where $\alpha$ is a scalar. 
It is straightforward to generalize to the vector case.
We are now almost ready for our main theorem. 
For any positive definite matrix $V$, denote $\langle x,y\rangle_V = x^\top Vy$ and $\|x\|_V = (x^\top Vx)^{\frac 12}$ as the inner product and the norm, respectively. 
For the linear space $M$, the usual orthogonal complement of $M$ associated with $V$ is defined as
\begin{equation}
M^\perp_V = \Big\{y: \langle x,y \rangle_V = 0 \,\,\, \text{for all} \,\,\, x \in M \Big\}. 
\end{equation}

For any positive definite matrix $V\in \RR^{m \times m}$ and convex cone $C\subseteq \RR^m$, let $y \sim N(0,V)$ and consider 
\begin{equation}
\label{eq:def_bar_chi}
T_0 = y^\top V^{-1}y - \min_{\eta \in C} (y-\eta)^\top V^{-1}(y-\eta).
\end{equation}

It can be shown \cite{shapiro1988towards} that $T_0$ is distributed as a weighted mixture of Chi-squared distribution associated with $V $ and $C$ denoted as $T_0 \sim \bar \chi^2(V, C)$. That is 
\begin{equation}
\label{eq:tail_c_prob}
\Pr\big\{ T_0 \geq c \big\} = \Pr\big\{ \bar \chi^2(V, C) \geq c \big\} = \sum_{i=0}^m w_i(m, V, C) \cdot \Pr\big\{ \chi^2_i \geq c \big\}, 
\end{equation}
where $\chi_i^2$ is a Chi-squared random variable with $i$ degrees of freedom and $\chi_0^2$ is the point mass at 0. Here $w_i(m, V, C)$ are non-negative weights satisfying $\sum_{i=1}^m w_i(m, V, C) = 1$. See Section \ref{sec:weights} for details. 
We then have the following main theorem:
\begin{theorem}
Suppose the hypothesis we would like to test is $H_0 : \alpha^* \in M$ versus $H_A : \alpha^* \in C\backslash M$ where we have the constraint $\alpha^* \in C$, and suppose all the conditions in Section \ref{sec:assumption} are satisfied. Then under the null hypothesis, the test statistics $T_s$, $T_w$ and $T_L$ constructed in Step 2 satisfy
\begin{equation}
\label{eq:null_distribution}
T_s, T_w, T_L {\to} \bar \chi^2(H^*_{\bm \alpha | \bm \theta}, C^*), 
\end{equation}
where $C^* = C \cap M^\perp_{H^*_{\bm \alpha | \bm \theta}}$.
\label{theorem_main}
\end{theorem}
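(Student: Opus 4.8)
Write $V := H^*_{\bm\alpha|\bm\theta}$. The strategy is to show that, under $H_0$, each of $T_w$, $T_L$, $T_s$ is asymptotically equal to one fixed functional of a single Gaussian vector, and then identify the law of that functional via \cite{shapiro1988towards}. The inputs are the vector version of Lemma \ref{lemma_normal}: $\sqrt n(\tilde{\bm\alpha}-\bm\alpha^*)\overset{d}{\to}\bm Z\sim N(\bm 0,V^{-1})$, $\sqrt n\,\hat{\bm U}(\bm\alpha^*)\overset{d}{\to}N(\bm 0,V)$, $\hat H_{\bm\alpha|\bm\theta}\overset{p}{\to}V$; together with the Newton-step linearization already extracted in \eqref{eq:alpha_tilde_bound}, namely $\sqrt n\,V(\tilde{\bm\alpha}-\bm\alpha^*)=-\sqrt n\,\hat{\bm U}(\bm\alpha^*)+o_\PP(1)$, and a second-order expansion $2n\big(\ell_{\text{de}}(\bm b)-\ell_{\text{de}}(\tilde{\bm\alpha})\big)=n(\bm b-\tilde{\bm\alpha})^\top\hat H_{\bm\alpha|\bm\theta}(\bm b-\tilde{\bm\alpha})+o_\PP(1)$ holding uniformly over $\bm b$ in an $\mathcal{O}_\PP(n^{-1/2})$ neighbourhood of $\bm\alpha^*$ (this is exactly the $M=\{\bm 0\}$, $C=\RR^d$ instance of the $\chi^2_d$ statement quoted before Theorem \ref{theorem_main}).

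\emph{Step 1: centering and the constrained ``trinity''.} Under $H_0$ we have $\bm\alpha^*\in M$; since $M$ is a linear space with $M\subseteq C$ and $C$ is a convex cone, convexity of $C$ forces $C+M=C$, hence $C-\bm\alpha^*=C$ and $M-\bm\alpha^*=M$. Substituting $\bm b\mapsto\bm\alpha^*+\bm b$ therefore turns $T_w$ into $\inf_{\bm b\in M}\|\bm Y_n-\bm b\|_{\hat H_{\bm\alpha|\bm\theta}}^2-\inf_{\bm b\in C}\|\bm Y_n-\bm b\|_{\hat H_{\bm\alpha|\bm\theta}}^2$ with $\bm Y_n:=\sqrt n(\tilde{\bm\alpha}-\bm\alpha^*)$. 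Plugging the second-order expansion of $\ell_{\text{de}}$ into $T_L$, and the linearization of $\tilde{\bm\alpha}$ into $T_s$, one shows (the constrained analogue of the classical equivalence of the Wald, likelihood-ratio, and score statistics) that $T_L$ and $T_s$ differ from this same expression by $o_\PP(1)$. Thus it suffices to analyse $\Psi_n:=\inf_{\bm b\in M}\|\bm Y_n-\bm b\|_{\hat H_{\bm\alpha|\bm\theta}}^2-\inf_{\bm b\in C}\|\bm Y_n-\bm b\|_{\hat H_{\bm\alpha|\bm\theta}}^2$.

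\emph{Step 2: geometry of $M\subseteq C$ and conclusion.} Equip $\RR^d$ with $\langle\cdot,\cdot\rangle_V$ and let $P$ be the $V$-orthogonal projection onto $M^\perp_V$. Because $C+M=C$, one has the $V$-orthogonal direct-sum decomposition $C=M\oplus C^*$ with $C^*=C\cap M^\perp_V$: for $\bm b\in C$, $\bm b-P\bm b\in M$ and $P\bm b=\bm b-(\bm b-P\bm b)\in C+M=C$, so $P\bm b\in C^*$, and the reverse inclusion is immediate. Consequently, for any $\bm y$, $\inf_{\bm b\in M}\|\bm y-\bm b\|_V^2=\|P\bm y\|_V^2$ and $\inf_{\bm b\in C}\|\bm y-\bm b\|_V^2=\inf_{\bm c\in C^*}\|P\bm y-\bm c\|_V^2$, whence $\inf_{\bm b\in M}\|\bm y-\bm b\|_V^2-\inf_{\bm b\in C}\|\bm y-\bm b\|_V^2=\|P\bm y\|_V^2-\inf_{\bm c\in C^*}\|P\bm y-\bm c\|_V^2$. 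Since $(\bm y,A)\mapsto\inf_{\bm b\in M}\|\bm y-\bm b\|_A^2-\inf_{\bm b\in C}\|\bm y-\bm b\|_A^2$ is continuous on $\RR^d\times\{A\succ 0\}$, combining $\hat H_{\bm\alpha|\bm\theta}\overset{p}{\to}V$, $\bm Y_n\overset{d}{\to}\bm Z$, Slutsky's lemma and the continuous mapping theorem gives $\Psi_n\overset{d}{\to}\|P\bm Z\|_V^2-\inf_{\bm c\in C^*}\|P\bm Z-\bm c\|_V^2$. Finally one checks — and this is precisely the content of \cite{shapiro1988towards} applied to the subspace $M$ inside the cone $C$ — that the law of this limit is $\bar\chi^2(V,C^*)=\bar\chi^2(H^*_{\bm\alpha|\bm\theta},C^*)$ in the sense of \eqref{eq:def_bar_chi}; together with Step 1 this yields $T_s,T_w,T_L\overset{d}{\to}\bar\chi^2(H^*_{\bm\alpha|\bm\theta},C^*)$.

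\emph{Where the work is.} Step 2 is essentially bookkeeping once \cite{shapiro1988towards} is available; the one point requiring care there is lining up the ``covariance'' versus ``information'' roles of $V$ and $V^{-1}$ in the $\bar\chi^2$ parametrisation, where the linearization $\sqrt n\,V(\tilde{\bm\alpha}-\bm\alpha^*)=-\sqrt n\,\hat{\bm U}(\bm\alpha^*)+o_\PP(1)$ is what reconciles the Wald and score pictures. The genuine obstacle is Step 1 for $T_L$ and $T_s$: the constrained minimizers $\bm b_M$ and $\bm b_C$ of $\ell_{\text{de}}$ sit (asymptotically) on $\partial C$ with possibly active inequality constraints, so the quadratic remainder of $\ell_{\text{de}}$ must be controlled \emph{uniformly} over a data-dependent neighbourhood — this is where the Lipschitz Hessian condition, the $\sqrt n$-rate of $\tilde{\bm\alpha}$, and the consistency of $\hat{\bm W}$ and $\hat H_{\bm\alpha|\bm\theta}$ enter — and $\hat{\bm U}(\bm b_M)-\hat{\bm U}(\bm b_C)$ must be re-expressed through the value functions, since $\bm b_M$ and $\bm b_C$ are not continuous functionals of the data. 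Establishing these $o_\PP(1)$ bounds is the technical heart of the argument.
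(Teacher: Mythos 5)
Your proposal follows essentially the same route as the paper's proof: the vector version of Lemma \ref{lemma_normal} for the asymptotic normality of $\tilde{\bm\alpha}$, the projection/Pythagoras argument from \cite{shapiro1988towards} (the paper's Lemma \ref{lemma:shapiro2}) to identify the null law of the Wald statistic as $\bar\chi^2(H^*_{\bm\alpha|\bm\theta},C^*)$, and local quadratic/linear expansions of $\ell_{\text{de}}$ and $\hat{\bm U}$ around $\tilde{\bm\alpha}$ to conclude $T_L=T_w+o_\PP(1)$ and $T_s=T_w+o_\PP(1)$. Your Step 2 merely makes explicit the direct-sum geometry $C=M\oplus C^*$ that the paper compresses into a single appeal to Pythagoras' theorem, and the caveats in your closing paragraph (uniformity of the quadratic remainder, the $V$ versus $V^{-1}$ parametrisation of $\bar\chi^2$) are real but are left equally implicit in the paper's own argument.
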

The proof of Theorem \ref{theorem_main} is postponed to Section \ref{sec:proof}. 

\begin{remark}
Our method is also valid for cones not centered at the origin, for example $C = \{\bm \alpha: R\bm \alpha \geq r \}$ and $M = \{\bm \alpha: R\bm \alpha = r \}$. The two-step procedure is exactly the same as before. Under the null hypothesis $R\bm \alpha^*=r$, by removing $\bm \alpha^*$ from both $\tilde{\bm \alpha}$ and $\bm b$, we see that $T_w$ has the same distribution with the case $C = \{\bm \alpha: R\bm \alpha \geq 0 \}$ and $M = \{\bm \alpha: R\bm \alpha = 0 \}$. 
This is also validated experimentally by the sum constraint in Section \ref{sec:synthetic_data}.
\end{remark}

\begin{remark}
In this paper we focus on hypothesis on a low dimension parameter $\bm \alpha$ only. It is in fact straightforward to extend Theorem \ref{theorem_main} to the whole parameter $\bm \beta$. However, as we will see in Section \ref{sec:weights},  the weights of the null distribution \eqref{eq:null_distribution} usually lack closed form expression and can only be calculated using numerical methods in practice. 
When dimension of parameter of interest is large, this could be computationally intractable. 
\end{remark}

\vspace{1mm}
With this weighted Chi-square distribution under the null, we can build hypothesis test for $\bm \alpha^*$ with any designed Type I error. It remains to calculate the weights $w_i$ and the critical value $c$ in \eqref{eq:tail_c_prob}. We describe the calculation of the weights in the next section. The critical value can be calculated numerically as follow.

\paragraph{Critical value.} The final step is to calculate the critical value. Specifically, we want to find critical value $c$ such that 
\begin{equation}
\Pr\Big\{ \bar \chi^2 \big(\hat H_{\bm \alpha| \bm \theta}, C^*\big) \geq c \Big\} = \sum_{i=0}^m w_i\big(m, \hat H_{\bm \alpha| \bm \theta}, C^*\big) \cdot \Pr\big\{ \chi^2_i \geq c \big\} = \gamma, 
\end{equation}
where $\gamma$ is the designed Type I error. This can be solved numerically by binary search on $c$. 

Combining all these result we are able to build valid testing procedure for the original hypothesis \eqref{eq:main_hypothesis} with asymptotic designed Type I error $\gamma$ by calculating the $1-\gamma$ quantile of the weighted Chi-squared distribution, and reject $H_0$ when $T_s$, $T_w$ or $T_L$ is greater than this quantile.

\subsection{Weights Calculation}
\label{sec:weights}

According to Lemma \ref{lemma_normal}, the covariance matrix $H^*_{\bm \alpha | \bm \theta}$ can be consistently estimated by sample version \eqref{eq:H_sample}. The cone $C^*$ depends on the constraint space $C$, $M$ and $H^*_{\bm \alpha | \bm \theta}$. For example for non-negative constraint, we have $M = \{ \bm\alpha: \bm\alpha = \bm 0\}$ and hence $M^\perp_{H^*_{\bm \alpha | \bm \theta}} = \RR^d$  and $C^* = C \cap M^\perp_{H^*_{\bm \alpha | \bm \theta}} = C$; for monotonic constraint, we have $M = \{ \bm\alpha: \alpha_1 = \alpha_2 = ... = \alpha_d\}$ and hence $M^\perp_{H^*_{\bm \alpha | \bm \theta}} = \{\bm\alpha: \bm 1^\top  H^*_{\bm \alpha | \bm \theta} \cdot \bm \alpha = 0\}$. Since $C = \{ \bm\alpha: \alpha_1 \leq \alpha_2 \leq ... \leq \alpha_d\}$, we have 
\begin{equation}
C^* = C \cap M^\perp_{H^*_{\bm \alpha | \bm \theta}} = \{ \bm\alpha: \alpha_1 \leq \alpha_2 \leq ... \leq \alpha_d, \bm 1^\top  H^*_{\bm \alpha | \bm \theta} \cdot \bm \alpha = 0 \}. 
\end{equation}

The weights $w_i(d, H^*_{\bm \alpha | \bm \theta}, C^*)$ depend on $H^*_{\bm \alpha | \bm \theta}$ and $C^*$ and can be complicated and without closed form expression. Here we briefly review the expression of general weights $w_i(m,V,C)$ for some general dimension $m$, covariance matrix $V$, and cone $C$ obtained in \cite{kudo1963multivariate}. 
We refer to \cite{shapiro1988towards} for more detailed formulas. We start from the simplest case where $C = \RR^m_+$ and $V = I$. From \eqref{eq:def_bar_chi} we have
\begin{equation}
T_0 = \sum_{i=1}^m \max(y_i, 0)^2 \sim \bar \chi^2(I, \RR^m_+).
\end{equation}

We can see that the weight $w_i(m, I, \RR^m_+)$ depends on the number of positive components of $y$: if $i$ of them are positive then the distribution would be $\chi_i^2$. There are in total $2^m$ choices of signs on each component of $y$ and therefore
\begin{equation}
w_i(m, I, \RR^m_+) = 2^{-m}\binom{m}{i}.
\end{equation}

We then consider $C = \RR^m_+$ with general $V$ where the weights are given by
\begin{equation}
w_i(m,V,\RR^m_+) \coloneqq \sum_{|\mathcal A| = i, \, \mathcal A \subseteq [m]} p\Big\{ (V_{\mathcal A^c})^{-1} \Big\} \cdot p\Big\{ V_{\mathcal A;\mathcal A^c} \Big\},
\label{wi}
\end{equation}
where the summation runs over all subsets $\mathcal A$ of $\{1, ..., m\}$ having $i$ elements. $\mathcal A^c$ is the complement of $\mathcal A$ and $V_{\mathcal A}$ is the submatrix of $V$ corresponding to those $y_i$ where $i \in \mathcal A$. $V_{\mathcal A;\mathcal A^c}$ is the covariance matrix under the condition $y_j = 0$ where $j \in \mathcal A^c$. Finally $p( \Lambda)$ denotes the probability that $ z \geq 0$ for a Gaussian random variable $ z \sim N( 0,  \Lambda)$.

The weight \eqref{wi} can be approximated using Monte Carlo simulation when $m$ is relatively small. For large $m$ this could be computational intensive, but since we are interested in $\bm\alpha \in \RR^d$ with $d \ll p$ we expect $d$ to be relatively small. 

\vspace{1mm}
We then consider more general cones $C$ and show how they can be reduced to the above case $C = \RR^m_+$ as proposed in \cite{shapiro1988towards}. First suppose $C$ is defined by linear inequality constraints
\begin{equation}
C_R = \{ \bm \alpha: R\bm \alpha \geq 0 \},
\end{equation}
where $R \in \RR^{m \times m}$ is nonsingular. In this case by linear transformation we have 
\begin{equation}
w_i(m,V,C_R) = w_i \big(m, RVR^\top , \RR^m_+ \big).
\end{equation}

More generally suppose $R \in \RR^{k \times m}$ with rank $k$, we have
\begin{equation}
w_{m-k+j}(m,V,C_R) = w_j \big(k, RVR^\top , \RR^m_+ \big),
\end{equation}
and the remaining weights vanish.

\vspace{1mm}
Finally consider the standard linear constraint with $C = \{\bm \alpha: R\bm \alpha \geq 0 \}$ and $M = \{\bm \alpha: R\bm \alpha = 0 \}$ where $R \in \RR^{k \times m}$ has full row rank. In this case we can calculate the final weights associated with $C^*$ directly. We first find $A \in \RR^{(m-k) \times m}$ as the null space of $RV$ satisfying $RVA^\top  = 0$. Then the cone $C^*$ is given by
\begin{equation}
C^* = \{\bm \alpha: R\bm \alpha \geq 0, A\bm \alpha = 0\},
\end{equation}
and the final weights associated with $C^*$ are given by
\begin{equation}
w_j(m, V, C^*) = w_j(k, RVR^\top , \RR^m_+ \big).
\end{equation}


\subsection{Power analysis}
\label{sec:power}

In this section we analyze the power of our proposed method and compare with the standard method where the constraints are ignored. Since it is unclear how to define the margin and alternative hypothesis for general cone constraint, in this section we focus on the nonnegativity constraint. The idea can be generalized to general cone straightforwardly.

We start from the scalar case where $d = 1$ and $\alpha$ is a scalar. In this case we want to test for 
\begin{equation}
H_0: \alpha^* = 0 \quad\text{versus}\quad H_A: \alpha^* > 0.
\end{equation}

To ease calculation we assume we have $n = 1$ sample and the variance is known as $\sigma^2 = H_{\bm \alpha | \bm \theta}^* = 1$. Since the three tests are asymptotically equivalent, we focus on Wald test only. According to Step 2 we have $T_w = \max( \tilde \alpha, 0)^2$ where $\tilde \alpha {\to} N(\alpha^*, 1)$ by Lemma \ref{lemma_normal}. Under the null hypothesis $\alpha^* = 0$, the asymptotic null distribution of $T_w$ is given by
\begin{equation}
T_w {\to} \frac 12 \chi^2_0 + \frac 12 \chi^2_1.
\end{equation}

Based on this asymptotic null distribution we reject the null hypothesis when $T_w$ is large. For standard Wald test where the nonnegativity constraint is ignored, the asymptotic null distribution is $\tilde \alpha {\to} N(0, 1)$ and we reject the null hypothesis when $\tilde \alpha$ is large.
Denote $\Phi(\cdot)$ as the cumulative distribution function of standard normal variable. 
Given the designed Type I error $\gamma$, the critical value for standard method is given by $\tau_1 = \Phi^{-1}(1-\gamma/2)$ and the critical value for our method is given by $\tau_2 = \Phi^{-1}(1-\gamma)$. 
Under the alternative hypothesis that $\alpha^* > 0$, the power of the standard method is given by
\begin{equation}
\PP_{\text{standard}}(\text{reject} \, | \, \alpha^*) = 1 - \Phi(\tau_1 - \alpha^*) + \Phi(-\tau_1 - \alpha^*), 
\end{equation}
while the power of our method is given by
\begin{equation}
\PP_{\text{our}}(\text{reject} \, | \, \alpha^*) =1 - \Phi(\tau_2 - \alpha^*). 
\end{equation}

It is straightforward to calculate that 
\begin{equation}
f(\alpha^*) \coloneqq \PP_{\text{our}}(\text{reject} \, | \, \alpha^*) - \PP_{\text{standard}}(\text{reject} \, | \, \alpha^*) = \big[ \Phi(\tau_1 - \alpha^*) - \Phi(\tau_2 - \alpha^*) \big] - \Phi(-\tau_1 - \alpha^*).
\end{equation}

For small $\alpha^*$ ($0 < \alpha^* \leq \tau_1 + \tau_2$), we have 
\begin{equation}
\Phi(\tau_1 - \alpha^*) - \Phi(\tau_2 - \alpha^*) \geq \frac{\gamma}{2} > \Phi(-\tau_1 - \alpha^*), 
\end{equation}
and hence $f(\alpha^*) > 0$. For large $\alpha^*$ ($\alpha^* > \tau_1 + \tau_2$), we write $\alpha^* = \tau_1 + \tau_2 + \epsilon$ with some $\epsilon > 0$ and rewrite $f(\alpha^*)$ as
\begin{equation}
g(\epsilon) \coloneqq f(\alpha^*) = \Phi(-\tau_2 - \epsilon) - \Phi(-\tau_1 - \epsilon) - \Phi(-2\tau_1 - \tau_2 - \epsilon).
\end{equation}

Clearly we have $g(0) > 0$ and $g(+\infty) = 0$ and also
\begin{equation}
\begin{aligned}
g'(\epsilon) &= -\phi(-\tau_2 - \epsilon) + \phi(-\tau_1 - \epsilon) + \phi(-2\tau_1 - \tau_2 - \epsilon) \\
&= -\phi(\tau_2 + \epsilon) + \phi(\tau_1 + \epsilon) + \phi(2\tau_1 + \tau_2 + \epsilon), 
\end{aligned}
\end{equation}
where $\phi(\cdot)$ denotes the probability density function of standard normal variable. Since $\phi(\cdot)$ decays exponentially, some simple calculation shows that $g'(\epsilon) < 0$ for any $\epsilon > 0$. Together with the fact that $g(0) > 0$ and $g(+\infty) = 0$, we know $g(\epsilon) > 0$ for any $\epsilon > 0$, which indicates that $f(\alpha^*) > 0$ for $\alpha^* > \tau_1 + \tau_2$. Therefore, for any $\alpha^* > 0$, we have $f(\alpha^*) > 0$ which shows that our method has greater power than the standard method.

Figure \ref{power_analysis_d1} shows the powers obtained by our method and standard method for $\gamma = 0.05$. We can see that when $\alpha^* = 0$ (i.e. under the null) both methods have Type I error 0.05. As $\alpha^*$ increases and the null is violated, our method has much larger power compared to the standard method. Finally when $\alpha^*$ is sufficiently large, both methods has power close to 1.

We then turn to the vector case where $\bm \alpha \in \RR^d$. Again to ease calculation we assume we have $n = 1$ sample and the variance is known as $\Sigma = H_{\bm \alpha | \bm \theta}^* = I_d$.
In this case we have $\tilde{\bm \alpha} {\to} N(\bm{\alpha}^*, I_d)$ by Lemma \ref{lemma_normal}, and the asymptotic null distribution of $T_w$ is given by
\begin{equation}
T_w {\to} \sum_{i=0}^d 2^{-d}\binom{d}{i} \chi^2_i.
\end{equation}

To violate the null hypothesis, we increase the value $\alpha_1^*$ and $\alpha_2^*, ..., \alpha_d^*$ remain to be 0. 
Figure \ref{power_analysis_d4} shows the comparison result for $d=4$ and $\gamma = 0.05$. The pattern is similar to Figure \ref{power_analysis_d1}.

\begin{figure*}[htbp]
\begin{minipage}[t]{0.5\linewidth}
\centering
\includegraphics[width=0.95\textwidth]{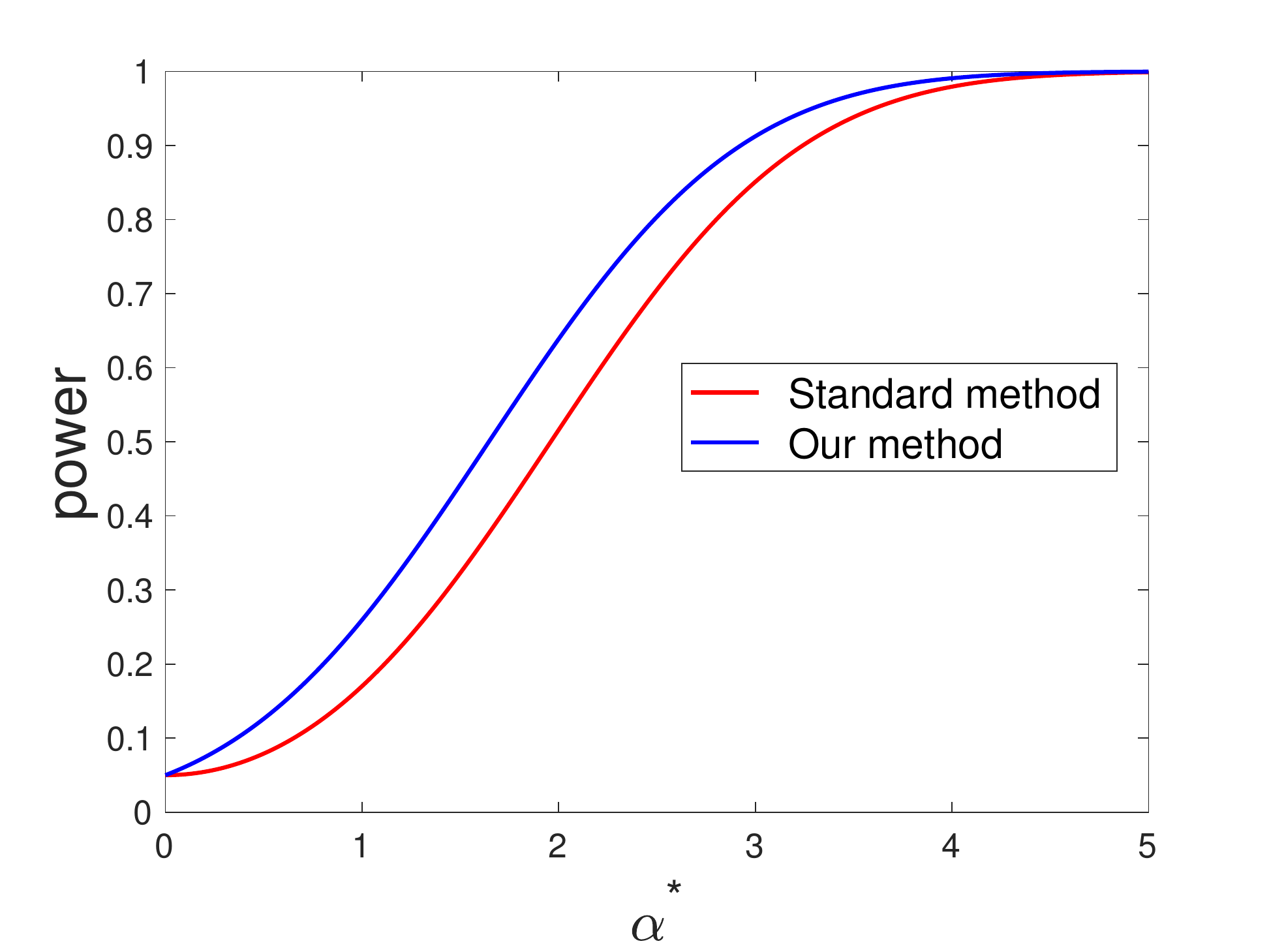}
\caption{Comparison of power with $d=1$}
\label{power_analysis_d1}
\end{minipage}
\begin{minipage}[t]{0.5\linewidth}
\centering
\includegraphics[width=0.95\textwidth]{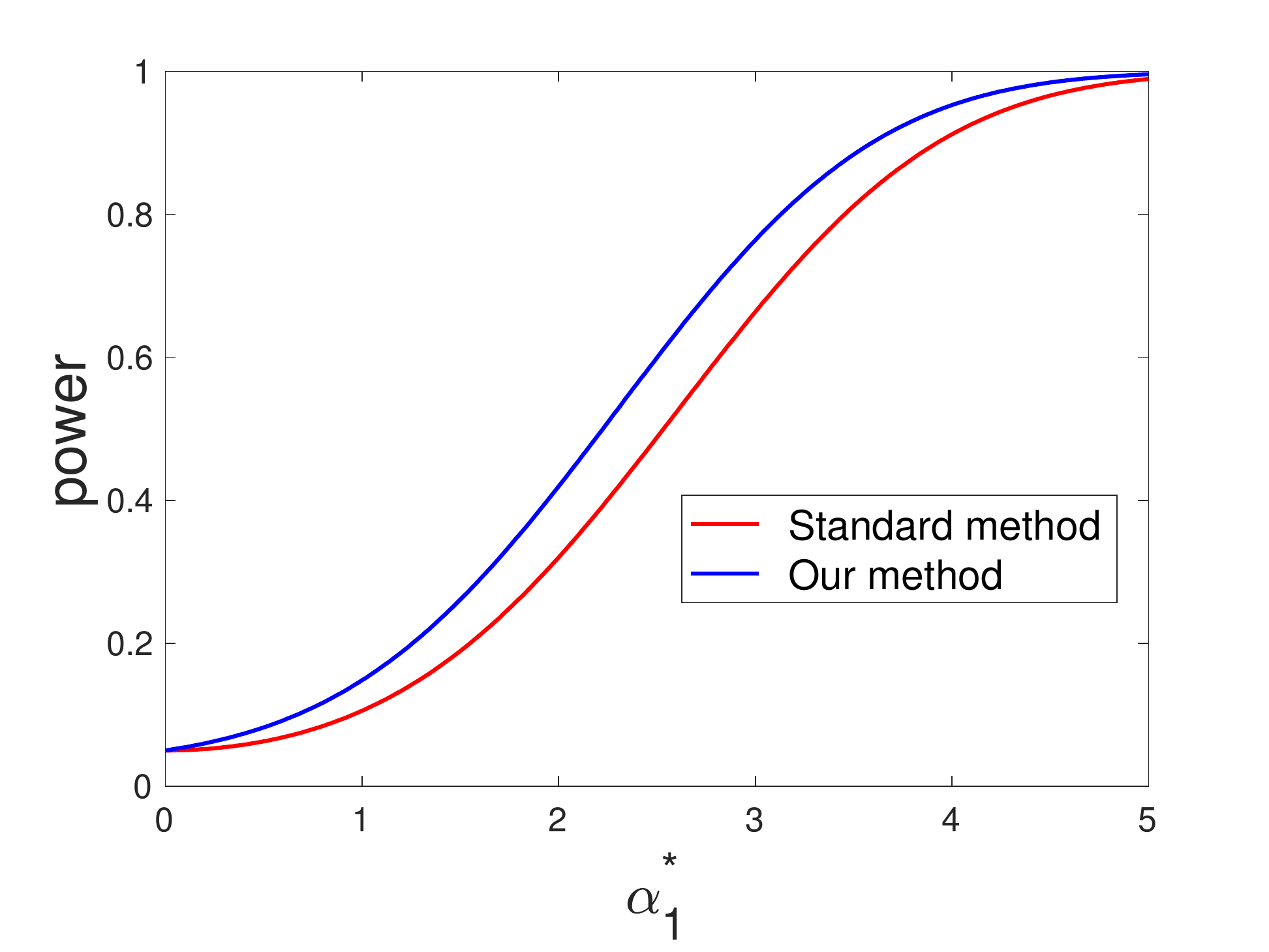}
\caption{Comparison of power with $d=4$}
\label{power_analysis_d4}
\end{minipage}
\end{figure*}

\subsection{Proof of Theorem \ref{theorem_main}}
\label{sec:proof}

Before we proceed with the main proof, we first introduce the following lemma in \cite{shapiro1988towards}.
\begin{lemma}
For any positive definite matrix $V\in \RR^{m \times m}$, convex cone $C\subseteq \RR^m$ and linear space $M \subseteq C$, let $y \sim N(\mu,V)$ with $\mu \in M$. Then the statistic
\begin{equation}
\label{eq:chi_bar_2}
T = \min_{\eta \in M} (y-\eta)^\top V^{-1}(y-\eta) - \min_{\eta \in C} (y-\eta)^\top V^{-1}(y-\eta)
\end{equation}
has the distribution $\bar \chi^2(V, C^*)$ where $C^* = C \cap M^\perp_{V^{-1}}$.
\label{lemma:shapiro2}
\end{lemma}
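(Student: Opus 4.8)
The plan is to strip the Gaussian mean by a translation, then, working in the $\langle\cdot,\cdot\rangle_{V^{-1}}$ geometry, split $\RR^m=M\oplus M^\perp_{V^{-1}}$ so that the two minimizations defining $T$ collapse onto a single minimization over $C^*=C\cap M^\perp_{V^{-1}}$; at that point $T$ coincides with the quantity $T_0$ in \eqref{eq:def_bar_chi} attached to $(V,C^*)$, which is $\bar\chi^2(V,C^*)$ by definition.

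\textbf{Step 1 (reduce to $\mu=0$).} Since $M$ is a linear space with $\mu\in M$, clearly $M+\mu=M$. Since $C$ is a convex cone with $M\subseteq C$, also $C+\mu=C$: for $\eta\in C$, convexity (using $\mu\in M\subseteq C$) gives $\frac{1}{2}\eta+\frac{1}{2}\mu\in C$, hence $\eta+\mu\in C$ by positive homogeneity, while $-\mu\in M\subseteq C$ yields the reverse inclusion $C\subseteq C+\mu$. Therefore the substitution $\eta\mapsto\eta+\mu$ leaves both $\min_{\eta\in M}(y-\eta)^\top V^{-1}(y-\eta)$ and $\min_{\eta\in C}(y-\eta)^\top V^{-1}(y-\eta)$ unchanged, so $T$ is a function of $y-\mu$ only; as $y-\mu\sim N(0,V)$, we may assume $\mu=0$.

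\textbf{Step 2 (decomposition).} Set $S:=M^\perp_{V^{-1}}$, so $\RR^m=M\oplus S$ orthogonally for $\langle\cdot,\cdot\rangle_{V^{-1}}$, and write $y=y_M+y_S$. The key structural fact is $C=M+C^*$ with $C^*=C\cap S$: the inclusion $M+C^*\subseteq C$ holds because $M\subseteq C$, $C^*\subseteq C$ and $C$ is closed under addition; conversely any $\eta\in C$ splits as $\eta=\eta_M+\eta_S$ with $\eta_M\in M$, $\eta_S\in S$, and then $\eta_S=\eta+(-\eta_M)\in C$ since $-\eta_M\in M\subseteq C$, so $\eta_S\in C\cap S=C^*$. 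Consequently the feasible pairs $(\eta_M,\eta_S)$ range over the product $M\times C^*$, and for $\eta=\eta_M+\eta_S$ orthogonality gives $\|y-\eta\|_{V^{-1}}^2=\|y_M-\eta_M\|_{V^{-1}}^2+\|y_S-\eta_S\|_{V^{-1}}^2$; minimizing the two summands separately yields $\min_{\eta\in C}(y-\eta)^\top V^{-1}(y-\eta)=\min_{\zeta\in C^*}(y_S-\zeta)^\top V^{-1}(y_S-\zeta)$ and $\min_{\eta\in M}(y-\eta)^\top V^{-1}(y-\eta)=\|y_S\|_{V^{-1}}^2$. Hence
\begin{equation}
T=\|y_S\|_{V^{-1}}^2-\min_{\zeta\in C^*}(y_S-\zeta)^\top V^{-1}(y_S-\zeta).
\end{equation}

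\textbf{Step 3 (identification).} Applying the same orthogonal split to the statistic $T_0$ in \eqref{eq:def_bar_chi} for the pair $(V,C^*)$: since $C^*\subseteq S$, for $\zeta\in C^*$ we have $\|y-\zeta\|_{V^{-1}}^2=\|y_M\|_{V^{-1}}^2+\|y_S-\zeta\|_{V^{-1}}^2$ and $\|y\|_{V^{-1}}^2=\|y_M\|_{V^{-1}}^2+\|y_S\|_{V^{-1}}^2$, so $T_0=\|y_S\|_{V^{-1}}^2-\min_{\zeta\in C^*}(y_S-\zeta)^\top V^{-1}(y_S-\zeta)$, which is exactly the displayed expression for $T$. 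Since $y\sim N(0,V)$, this random variable is $\bar\chi^2(V,C^*)$ by definition, which proves the lemma. Equivalently, taking $U\in\RR^{m\times q}$ whose columns form a $\langle\cdot,\cdot\rangle_{V^{-1}}$-orthonormal basis of $S$ turns $T$ into $\|z\|_2^2-\min_{w:\,Uw\in C^*}\|z-w\|_2^2$ with $z=U^\top V^{-1}y\sim N(0,I_q)$, exhibiting the weighted-$\chi^2$ tail \eqref{eq:tail_c_prob} explicitly.

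The crux is the decomposition $C=M+C^*$ in Step 2 together with the observation that in the projection onto $C$ the $M$-component may be chosen freely: this is precisely where the hypotheses ``$M$ a linear space'' and ``$C$ a convex cone with $M\subseteq C$'' enter, and it is what lets the two nested minimizations separate. A minor point to dispatch is that $\bar\chi^2(V,C^*)$ in \eqref{eq:def_bar_chi} remains well defined even though $C^*$ lies in the proper subspace $S$ — the weights $w_i$ in \eqref{eq:tail_c_prob} with $i>\dim S$ simply vanish.
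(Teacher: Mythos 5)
Your proof is correct, and it rests on the same geometric idea as the paper's: work in the $\langle\cdot,\cdot\rangle_{V^{-1}}$ inner product, split $\RR^m = M \oplus M^\perp_{V^{-1}}$, and reduce $T$ to the quantity defining $\bar\chi^2(V,C^*)$. The difference is one of completeness rather than of route. The paper's proof simply asserts the three-term Pythagoras identity
\begin{equation*}
\big\| y - P(y,M) \big\|_{V^{-1}}^2 = \big\| y - P(y,C) \big\|_{V^{-1}}^2 + \big\| P(y,C^*) \big\|_{V^{-1}}^2,
\end{equation*}
leaning on \cite{shapiro1988towards} for its validity, and does not discuss the case $\mu \neq 0$. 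You instead prove the underlying structural fact $C = M + C^*$ (using that $M$ is a linear subspace of the convex cone $C$, hence $C$ is closed under adding elements of $M$), from which the two minimizations separate and the identity above follows; this is exactly the content hidden in the paper's citation. Your Step 1, showing $C + \mu = C$ and $M + \mu = M$ so that one may take $\mu = 0$, fills a gap the paper leaves implicit, and your closing remark that the weights $w_i$ vanish for $i > \dim M^\perp_{V^{-1}}$ correctly addresses the well-definedness of $\bar\chi^2(V,C^*)$ on a proper subspace. In short: same approach, but your version is self-contained where the paper's is an appeal to the reference.
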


\begin{proof}
Denote $P(\cdot, C)$ as the orthogonal projection onto $C$ according to norm $\| \cdot \|_{V^{-1}}$. We have
\begin{equation}
\big\| y - P(y,C) \big\|_{V^{-1}}^2 = \min_{\eta \in C} (y-\eta)^\top V^{-1}(y-\eta).
\end{equation}

Since $M$ is linear space, we have the Pythagoras' theorem
\begin{equation}
\big\| y - P(y,M) \big\|_{V^{-1}}^2 = \big\| y - P(y,C) \big\|_{V^{-1}}^2 + \big\| P(y,C^*) \big\|_{V^{-1}}^2.
\end{equation}

Then \eqref{eq:chi_bar_2} follows directly from \eqref{eq:def_bar_chi}.
\end{proof}

\vspace{2mm}
We then proceed with the proof of Theorem \ref{theorem_main}. According to Lemma \ref{lemma_normal} we have
\begin{equation}
\sqrt{n} (\tilde{\bm \alpha} - {\bm \alpha}^*) {\to} N(\bm 0, H_{{\bm \alpha} | \bm \theta}^{*-1}).
\end{equation}

Under the null ${\bm \alpha}^* \in M$, Lemma \ref{lemma:shapiro2} immediately indicates that $T_w {\to} \bar \chi^2(H^*_{{\bm \alpha} | \bm \theta}, C^*)$, 
where $C^* = C \cap M^\perp_{H^*_{{\bm \alpha} | \bm \theta}}$. We then show that $T_L$ and $T_s$ are asymptotically equivalent to $T_w$. For $T_L$, following Proposition 4.2.2 in \cite{silvapulle2011constrained}, we have the local quadratic approximation
\begin{equation}
\begin{aligned}
\ell_{\text{de}}(\bm b) &= \ell_{\text{de}}(\tilde {\bm \alpha}) + \frac{\partial \ell_{\text{de}}({\bm \alpha})}{\partial{\bm \alpha}}\bigg\rvert^\top _{{\bm \alpha} = \tilde {\bm \alpha}}(\bm b-\tilde{\bm \alpha}) + \frac 12(\bm b-\tilde{\bm \alpha})^\top  \frac{\partial^2 \ell_{\text{de}}({\bm \alpha})}{\partial{\bm \alpha}^2}\bigg\rvert_{{\bm \alpha} = \tilde {\bm \alpha}} (\bm b-\tilde{\bm \alpha}) + o_p(1) \\
&= \ell_{\text{de}}(\tilde {\bm \alpha}) + \frac 12 (\bm b-\tilde{\bm \alpha})^\top  \hat H_{{\bm \alpha} | \bm \theta} (\bm b-\tilde{\bm \alpha}) + o_p(1),
\end{aligned}
\end{equation}
where the second term follows from the definition of $\bm{\tilde {\bm \alpha}}$ and the following Taylor expansion
\begin{equation}
\frac{\partial \ell_{\text{de}}({\bm \alpha})}{\partial{\bm \alpha}}\bigg\rvert_{{\bm \alpha} = \tilde {\bm \alpha}} = \hat {\bm U}(\tilde {\bm \alpha}) = \hat{\bm U}(\hat {\bm \alpha}) + \Big(\frac{\partial\hat{\bm U}(\hat{\bm \alpha})}{\partial {\bm \alpha}}\Big) \cdot (\tilde {\bm \alpha} - \hat {\bm \alpha}) + o_p(1) = o_p(1).
\end{equation}

The first term $\ell_{\text{de}}(\tilde {\bm \alpha})$ is a constant over $\bm b$, therefore we have
\begin{equation}
\begin{aligned}
T_L &= 2\Big(\inf_{\bm b \in M} \ell_{\text{de}}(\bm b) - \inf_{\bm b \in C} \ell_{\text{de}}(\bm b)\Big) \\
&= \inf_{\bm b \in M}\Big\{ (\tilde{\bm \alpha}-\bm b)^\top  \hat H_{{\bm \alpha}|\bm \theta} (\tilde{\bm \alpha}-\bm b)+ o_p(1)\Big\} - \inf_{\bm b \in C}\Big\{ (\tilde{\bm \alpha}-\bm b)^\top  \hat H_{{\bm \alpha}|\bm \theta} (\tilde{\bm \alpha}-\bm b) + o_p(1) \Big\} \\
&= T_w + o_p(1).
\end{aligned}
\end{equation}

This shows that $T_L$ has the same asymptotic distribution as $T_w$. Similarly, for $T_s$ we have the local approximation
\begin{equation}
\hat {\bm U}(b) = \hat {\bm U}(\tilde {\bm \alpha}) + \hat H_{{\bm \alpha}|\bm \theta} \cdot (\bm b - \tilde {\bm \alpha}) + o_p(1).
\end{equation}

Plugging in $\bm b_M$ and $\bm b_C$ we obtain
\begin{equation}
\begin{aligned}
T_s &= \Big( \hat{\bm U}(\bm b_M) - \hat{\bm U}(\bm b_C) \Big)^\top  \hat H_{{\bm \alpha} | \bm \theta}^{-1} \Big( \hat{\bm U}(\bm b_M) - \hat{\bm U}(\bm b_C) \Big) \\
&= \big( \bm b_M - \bm b_C \big)^\top  \hat H_{{\bm \alpha} | \bm \theta} \big( \bm b_M - \bm b_C \big) + o_p(1) \\
&= \big( \tilde{\bm \alpha} - \bm b_M \big)^\top  \hat H_{{\bm \alpha} | \bm \theta} \big( \tilde{\bm \alpha} - \bm b_M \big) - \big( \tilde{\bm \alpha} - \bm b_C \big)^\top  \hat H_{{\bm \alpha} | \bm \theta} \big( \tilde{\bm \alpha} - \bm b_C \big) + o_p(1) \\
&= T_w + o_p(1),
\end{aligned}
\end{equation}
where the third equality comes from the Pythagoras' theorem
\begin{equation}
\big\| \tilde{\bm \alpha} - P(\tilde{\bm \alpha}, C) \big\|^2 = \big\| \tilde{\bm \alpha} - P(\tilde{\bm \alpha}, M) \big\|^2 + \big\| P(\tilde{\bm \alpha}, M) - P(\tilde{\bm \alpha}, C) \big\|^2, 
\end{equation}
and the fact that 
\begin{equation}
\begin{aligned}
\bm b_M &= \arg\inf_{\bm b \in M} \ell_{\text{de}}(\bm b) = \arg\inf_{\bm b \in M} \Big\{ \ell_{\text{de}}(\tilde {\bm \alpha}) + \frac 12 (\bm b-\tilde{\bm \alpha})^\top  \hat H_{{\bm \alpha} | \bm \theta} (\bm b-\tilde{\bm \alpha}) + o_p(1) \Big\} \\
&= \arg\inf_{\bm b \in M} \Big\{ (\bm b-\tilde{\bm \alpha})^\top  \hat H_{{\bm \alpha} | \bm \theta} (\bm b-\tilde{\bm \alpha}) \Big\} + o_p(1). 
\end{aligned}
\end{equation}

This shows that $T_s$ has the same asymptotic distribution as $T_w$, which completes the proof.

\section{Synthetic Data} \label{sec:synthetic_data}


In this section we apply our method on synthetic datasets. We consider linear model $\bm Y = \bm X\bm \beta^* + \bm \epsilon$ with $\bm \epsilon \sim N_n(\bm 0,\sigma^2\bm I_n)$, and impose different kinds of constraints on the first two variables $\bm\alpha^* = (\alpha_1^*, \alpha_2^*) = (\beta_1^*, \beta_2^*)$. 
Specifically, we consider the following three constraints.
\begin{enumerate}
\item Monotonicity constraint. We have the monotonic constraint $\alpha_1 \leq \alpha_2$, and the hypothesis we would like to test is
\begin{equation}
H_0: \alpha_1^* = \alpha_2^* \quad\text{versus}\quad H_A: \alpha_1^* < \alpha_2^*.
\end{equation}
For the experiment we set $\alpha_1^* = \alpha_2^* = -1$ and $\alpha_i^* = 0$ elsewhere.
\item Non-negativity constraint. We have the non-negative constraint $\alpha_1, \alpha_2 \geq 0$, and the hypothesis we would like to test is
\begin{equation}
H_0: \alpha_1^* = \alpha_2^* = 0 \quad\text{versus}\quad H_A: \alpha_1^* > 0 \,\,\, \text{or} \,\,\, \alpha_2^* > 0.
\end{equation}
For the experiment we set $\alpha_1^* = \alpha_2^* = 0$, $\alpha_p^* = \alpha_{p-1}^* = 1$ where $p$ is the dimension of $\bm\beta$, and $\alpha_i^* = 0$ elsewhere.
\item Sum constraint. We have the sum constraint $\alpha_1 + \alpha_2 \leq -2$, and the hypothesis we would like to test is
\begin{equation}
H_0: \alpha_1^* + \alpha_2^* = -2 \quad\text{versus}\quad H_A: \alpha_1^* + \alpha_2^* < -2.
\end{equation}
For the experiment we set $\alpha_1^* = \alpha_2^* = -1$ and $\alpha_i^* = 0$ elsewhere.
\end{enumerate}

In low dimensions we have the Least Square estimator $\hat{\bm \beta} = (\bm X^\top \bm X)^{-1}\bm X^\top \bm Y$ with $\hat{\bm \beta} \sim N\Big(\bm \beta^*, \sigma^2(\bm X^\top \bm X)^{-1}\Big)$, from which we can construct confidence interval and hypothesis testing for $\bm \beta^*$. 
In high dimension Least Square estimator is ill-conditioned so we instead calculate penalized estimator $\hat{\bm \beta}$ according to \eqref{MLE}. For example letting $P_{\lambda}$ be $L_1$ penalty we get the LASSO estimator. 
Alternatively we can get the estimator $\hat{\bm \beta}$ under constraint. For example for non-negativity constraint, 
we can get nonnegative sparse estimator directly \cite{slawski2013non}. 

In \cite{ning2017general} the authors show that our conditions in Section \ref{sec:assumption} are satisfied for linear regression so we then follow our procedure to calculate the test statistics $T_s$, $T_w$ and $T_L$. 
We set $\sigma = 1$ and we assume $\sigma$ is known. Each row of $\bm X$ is sampled from multivariate normal distribution $\bm X \sim N_p(\bm 0, \bm \Sigma)$, where $\bm \Sigma$ is a Toeplitz matrix with $\Sigma_{jk} = \rho^{|j-k|}$. The tuning parameter is set to be $\lambda = \sqrt{\log{p}/n}$ and $\lambda' = \frac{1}{2}\sqrt{\log{p}/n}$. 
We vary $\rho \in \{0.2, 0.4, 0.6, 0.8\}$, $p \in \{100, 300, 500\}$ and for each setting we generate $n=200$ samples. The averaged empirical Type I error on 500 replicates under the three different constraints are shown in Table \ref{synthetic_monotonic} - \ref{synthetic_sum}. The designed Type I error is $5\%$.

%
%
%
%
%

\begin{table}[h]
\caption{Empirical Type I error for monotonic constraint}
\setlength{\tabcolsep}{12pt}
\begin{center}
\begin{tabular}{@{}c c c c c c c}
\hline

Method &\diagbox[width=3.5em,height=2em,trim=l]{$p$}{$\rho$}  & $0.2$ & $0.4$ & $0.6$ & $0.8$ \\\hline

Score &100 & 6.4\% & 5.6\% & 5.4\% & 6.8\%  \\
 & 300 & 6.6\% & 5.4\% & 6.4\% & 5.6\% \\
 & 500 & 6.8\% & 5.4\% & 6.6\% & 6.6\% \\ \hline
 
 Wald & 100 & 5.4\% & 4.4\% & 4.8\% & 7.0\%  \\
 & 300 & 5.0\% & 3.6\% & 5.2\% & 6.2\% \\
 & 500 & 3.8\% & 3.2\% & 3.8\% & 5.2\% \\ \hline
 
 LR & 100 & 6.2\% & 4.8\% & 5.4\% & 6.4\% \\
 & 300 & 5.2\% & 5.2\% & 5.8\% & 6.4\% \\
 & 500 & 4.8\% &4.0\% &6.0\% & 5.6\% \\\hline

\end{tabular}
\end{center}
\label{synthetic_monotonic}
\end{table}%

\begin{table}[h]
\caption{Empirical Type I error for non-negative constraint}
\setlength{\tabcolsep}{12pt}
\begin{center}
\begin{tabular}{@{}c c c c c c c}
\hline

Method &\diagbox[width=3.5em,height=2em,trim=l]{$p$}{$\rho$}  & $0.2$ & $0.4$ & $0.6$ & $0.8$ \\\hline

Score &100 & 5.6\% & 6.2\% & 6.2\% & 4.8\%  \\
 & 300 & 4.6\% & 5.2\% & 5.2\% & 6.2\% \\
 & 500 & 5.4\% & 5.6\% & 5.0\% & 5.2\% \\ \hline
 
 Wald & 100 & 5.6\% & 4.8\% & 5.0\% & 4.6\%  \\
 & 300 & 5.0\% & 3.6\% & 4.0\% & 3.6\% \\
 & 500 & 3.2\% & 4.2\% & 3.4\% & 3.2\% \\ \hline
 
 LR & 100 & 6.0\% & 5.0\% & 5.8\% & 4.6\% \\
 & 300 & 3.6\% & 4.2\% & 3.6\% & 4.8\% \\
 & 500 & 3.6\% &3.4\% &4.8\% & 4.4\% \\\hline

\end{tabular}
\end{center}
\label{synthetic_nonnegative}
\end{table}%

\begin{table}[h]
\caption{Empirical Type I error for sum constraint}
\setlength{\tabcolsep}{12pt}
\begin{center}
\begin{tabular}{@{}c c c c c c c}
\hline

Method &\diagbox[width=3.5em,height=2em,trim=l]{$p$}{$\rho$}  & $0.2$ & $0.4$ & $0.6$ & $0.8$ \\\hline

Score &100 & 4.8\% & 6.2\% & 5.4\% & 5.2\%  \\
 & 300 & 4.8\% & 4.4\% & 5.8\% & 5.4\% \\
 & 500 & 4.4\% & 3.8\% & 3.6\% & 4.0\% \\ \hline
 
 Wald & 100 & 3.6\% & 5.4\% & 4.2\% & 4.0\%  \\
 & 300 & 3.8\% & 4.4\% & 3.6\% & 4.2\% \\
 & 500 & 4.0\% & 3.6\% & 4.2\% & 4.0\% \\ \hline
 
 LR & 100 & 4.2\% & 5.6\% & 4.4\% & 5.2\% \\
 & 300 & 3.8\% & 4.4\% & 4.8\% & 5.0\% \\
 & 500 & 3.4\% &4.0\% &5.2\% & 4.6\% \\\hline

\end{tabular}
\end{center}
\label{synthetic_sum}
\end{table}%

From the three tables we see that our algorithm works well for all these three constraints. We then check the power of our algorithm. 
For each constraint, we introduce a variable $margin$ that measures how much we violate the null hypothesis (i.e. how far we are away from the boundary). 
Specifically, for monotonic constraint, we set $\alpha_1^* = -1, \alpha_2^* = -1 + margin$; for non-negative constraint we set $\alpha_1^* = \alpha_2^* = margin/2$; for sum constraint, we set $\alpha_1^* = -1, \alpha_2^* = -1 - margin$. 
Intuitively, as the margin increases, the power of the test will increase. 
For all the three constraints, we compare the power of our testing procedures to the standard Wald/Score/Likelihood ratio tests where we ignore the constraint. 
For example for monotonic constraint our method tests for 
\begin{equation}
H_0: \alpha_1^* = \alpha_2^* \quad\text{versus}\quad H_A: \alpha_1^* < \alpha_2^*, 
\end{equation}
while the standard method tests for 
\begin{equation}
H_0: \alpha_1^* = \alpha_2^* \quad\text{versus}\quad H_A: \alpha_1^* \neq \alpha_2^*.
\end{equation}

We vary $margin \in \{0, 0.05, 0.1, 0.2, 0.3, 0.5, 1\}$ where $margin = 0$ corresponds to null hypothesis and others corresponds to alternative hypothesis. 
Under the alternative hypothesis, for both our method and standard method, Wald/Score/Likelihood ratio tests gives nearly identical power. Therefore we only report the mean of them. 
The comparison results on 500 replicates are shown in Table \ref{power}, and we can see that by considering the known constraint, our tests have much stronger power. 


\begin{table}[h]
\caption{Power of the tests}
\setlength{\tabcolsep}{5pt}
\begin{center}
\begin{tabular}{@{}llccccccc}
\hline
Constraint & \diagbox[width=7em,height=2.3em,trim=l]{Method}{margin} & 0 & 0.05 & 0.1 & 0.2 & 0.3 & 0.5 & 1 \\\hline
Monotonic & Our method & 0.045  &  0.061  &  0.113 &  0.211 &   0.331 &   0.597  &  0.988\\
& Standard method & 0.047  &  0.044  &  0.068  &  0.138   & 0.235  &  0.488   & 0.978\\\hline
Non-negative & Our method & 0.036  &  0.069  &  0.112 &   0.278  &  0.504 &   0.879 &   1.000\\
& Standard method & 0.039  &  0.032  &  0.047 &   0.134  &  0.323 &   0.788 &   1.000\\\hline
Sum & Our method & 0.060  &  0.169 &   0.266&    0.478  &  0.712 &   0.950  &  1.000 \\
& Standard method & 0.041 &   0.097  &  0.156  &  0.340    &0.596   & 0.922   & 1.000 \\\hline
\end{tabular}
\end{center}
\label{power}
\end{table}



\section{Real Data}\label{sec:real_data}

In this section we apply our method to two real datasets on ARCH model and information diffusion model. For both the models, we have the \emph{intrinsic} non-negative constraint on the parameters. Therefore, to provide statistical inference on the parameters, we should use constrained testing method. 

\subsection{ARCH Model}

As a first example, we consider the application of our method in financial economics, where most of the existing works focus on estimations and predictions \cite{kelly2013market, feng2018deep, lu2019expected, feng2019factor, ball2016accruals}. However, people are usually more interested in testing whether a specific factor affects the prediction results, with a focus on testing inequality constraints \cite{wolak1989testing}. 
The model we consider is the autoregressive conditional heteroscedasticity (ARCH) model introduced in \cite{engle1982autoregressive}. ARCH model is very popular in modeling financial economic time series like exchange rates, commodity prices. The main feature is that ARCH model attempts to model the variance as well. More formally, ARCH models assume the variance of the current error term to be a function of the actual sizes of the previous time periods' error terms. To introduce the model, let $\mathcal F_t$ be the information up to time $t$, $y_t$ be the dependent variable and $\bm x_t$ be exogeneous variables included in $\mathcal F_{t-1}$ ($\bm x_t$ may contain lagged dependent variables like $y_{t-1}$ and $y_{t-2}$). An ARCH model with lag length $q$ can be written as
\begin{gather}
y_t | \mathcal F_{t-1} \sim N(\bm x_t^\top \bm \beta,h_t),  \\
h_t = \alpha_0 + \alpha_1\epsilon_{t-1}^2 + ... + \alpha_q\epsilon_{t-q}^2,  \\
\epsilon_t = y_t - \bm x_t^\top \bm \beta.
\label{AR}
\end{gather}

From the definition of the model we can see that, 
if some $\alpha_i$ in \eqref{AR} is negative, then a large value for $\epsilon_{t-i}$ would lead to negative variance for $y_t$. Hence the admissible range for $\alpha_1, ..., \alpha_q$ should be $\big\{\alpha_1 \geq 0, ..., \alpha_q \geq 0\big\}$. Therefore, the testing problem should be 
\begin{equation}
H_0:\alpha_i=0 \,\,\, \text{versus} \,\,\, H_A:\alpha_i > 0, 
\end{equation}
instead of 
\begin{equation}
H_0:\alpha_i=0 \,\,\, \text{versus} \,\,\, H_A:\alpha_i \neq 0. 
\end{equation}
In this section we focus on $\alpha_1$ and test for $H_0:\alpha_1=0$ versus $H_A:\alpha_1 > 0$.

\vspace{2mm}
The data we use are the All Ordinaries Index (Australia) from January 5, 1984 to November 29, 1985, denoted as $I_t$. This index is a weighted average of the prices of selected shares in Australia which corresponds to the Dow-Jones Index in the United States.
The data are from the \emph{Australian Financial Review}. We have a total of 484 observations. The return variable $y_i$ is defined as $\log{(I_t / I_{t-1})}$, and $\bm x_t$ are the lagged dependent variables.

We estimate $\bm \alpha$ by first estimating the best fitting autoregressive model AR($q$):
\begin{equation}
y_t = a_0 + a_1y_{t-1} + ... + a_qy_{t-q} + \epsilon_{t}.
\end{equation}

We then obtain the squares of the error $\hat \epsilon^2$ and regress them on a constant and $q$ lagged values: 
\begin{equation}
\hat \epsilon_t^2 =\hat \alpha_0 + \hat \alpha_1 \hat \epsilon_{t-1}^2 + ... + \hat \alpha_q \hat \epsilon_{t-q}^2. 
\end{equation}

The estimation is based on LASSO estimator with $L_1$ penalty. We then follow our procedure to give the $p$-value. We choose $q=30$ here and it turns out that the result is not sensitive to the choice of $q$. All the three tests give $p$-value 0.41, indicating that we should not reject the null. This result is consistent with the claim in \cite{silvapulle1995score}.

\subsection{Information Diffusion}

\begin{figure}[htbp]
\begin{center}
\includegraphics[width=0.62\textwidth]{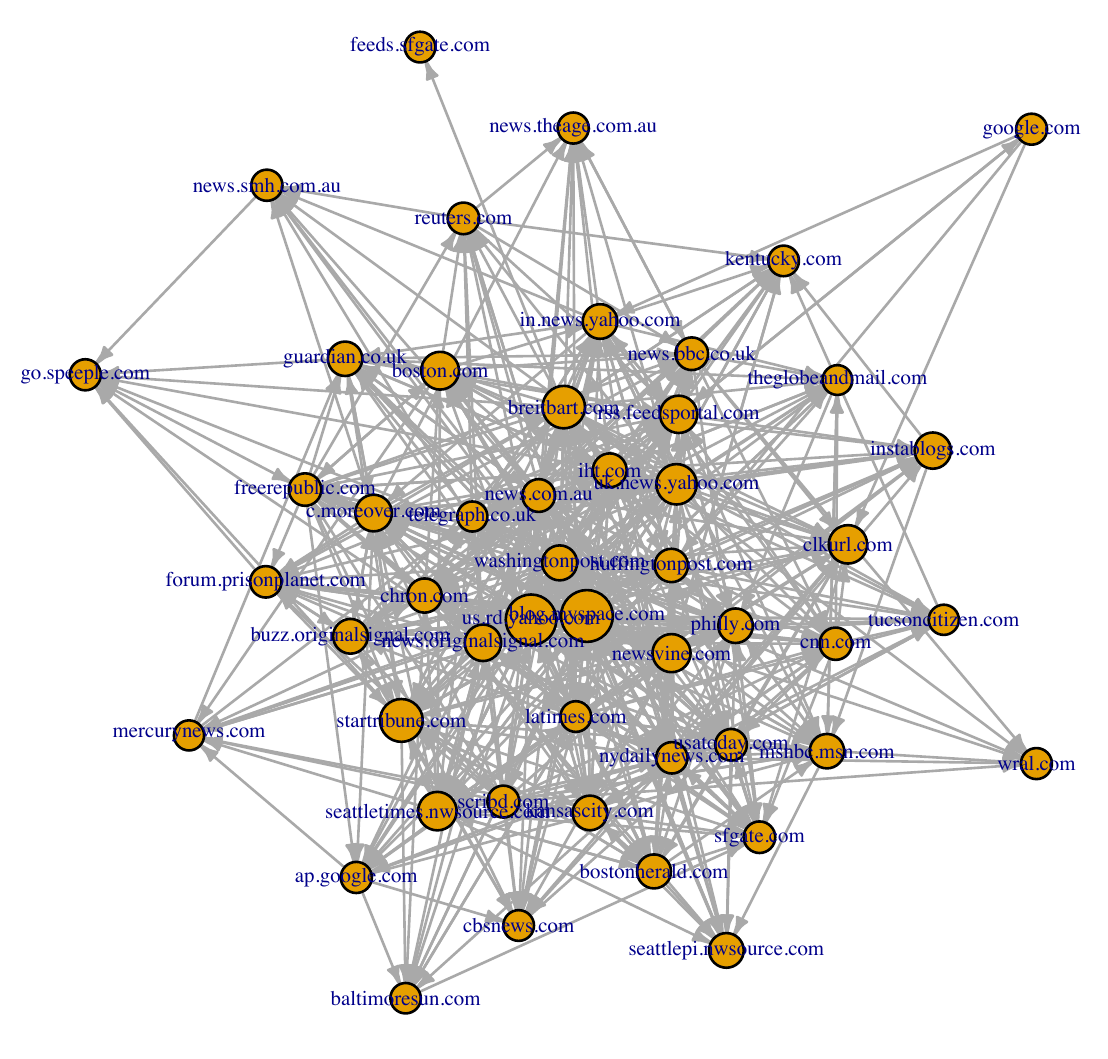}
\caption{Network structure by MLE}
\label{original}
\end{center}
\end{figure}

\begin{figure}[htbp]
\begin{center}
\includegraphics[width=0.62\textwidth]{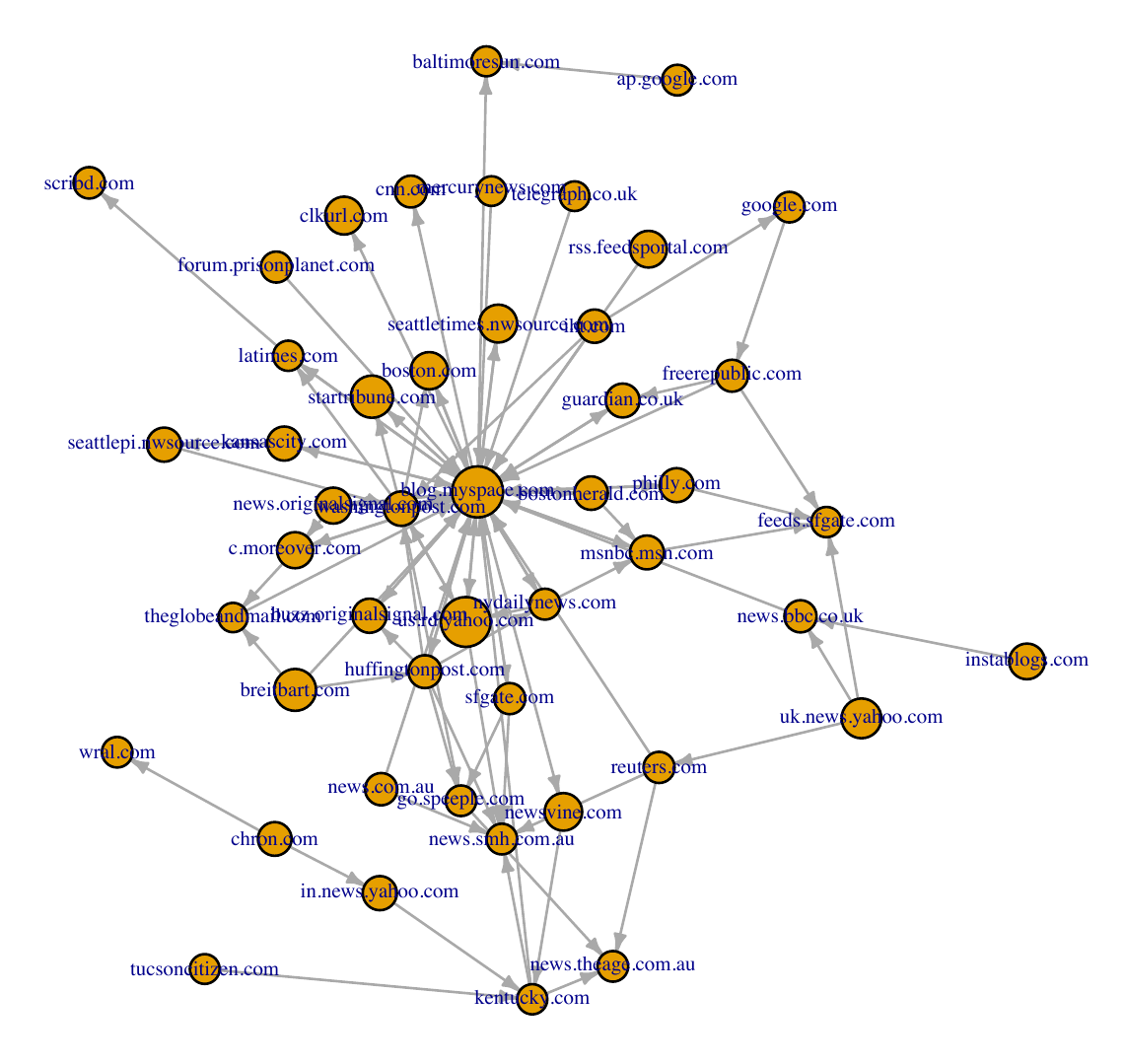}
\caption{Network structure by our algorithm with fixed critical $p$-value 0.05}
\label{ppp}
\end{center}
\end{figure}

%
%

The second model we consider is the network model. 
Network and graphical models have been widely used in fields including neuroscience, social sciences, and statistics \cite{goldenberg2010survey, zhuo2018mixing, gao2018stability, zhao2019direct, na2019estimating, drton2017structure, fortunato2010community, gao2017controllability, birge2019optimal, kim2019two}. 
We consider the time diffusion model where 
a diffusion matrix $A$ quantifies the structure between nodes. If $a_{ij} \neq 0$ then there is a link $i \to j$ and information from node $i$ may propagate to $j$. The parameter $a_{ij}$ measures how strong the relation is. Clearly only nonnegative $a_{ij}$ is meaningful in this model. Therefore if we want to know whether there exists an edge from $i$ to $j$ (i.e. whether $a_{ij} \neq 0$), this is a constrained testing problem with nonnegativity constraint (i.e. we should test for whether $a_{ij} > 0$). 
For this network diffusion problem, many existing methods \cite{rodriguez2011uncovering, gomez2015estimating, yu2018learning, Yu2017AnIM} have been proposed to recover the diffusion matrix $A$. However, all of them focus on point estimation with no statistical inference.

The specific diffusion model we use is the discretized CICE model introduced in \cite{pouget2015inferring}. We use the Memetracker dataset \cite{snapnets}\footnote{Data available at \url{http://memetracker.org}} which contains more than 172 million news articles and blog posts from 1 million online sources. This dataset contains many textual phrases (like `lipstick on a pig') extracted from websites, and the time each website mentioned it. We cluster the phrases to aggregate different textual variants of the same phrase. After aggregating
different textual variants of the same phrase, we consider each phrase cluster as a separate
cascade $c$. Since all documents are time stamped, a cascade $c$ is simply a set of time-stamps when websites first mentioned a phrase in the phrase cluster $c$. We can observe the times when websites mention a particular phrase but we don't know where they copied that phrase from. 

For the experiments we extract top 50 sites with about 2000 cascades among it. We first use penalized Maximum Likelihood Estimation for discrete CICE model in \cite{pouget2015inferring} with appropriate penalty parameter to estimate the network diffusion matrix: this network structure is shown in Figure \ref{original}. It is very dense and has many false positive edges. We then apply our algorithm to check the significance of each discovered edge. We fix the critical $p$-value to be 0.05 and keep the edges with $p$-value less than or equal to 0.05. After applying our algorithm the estimated network structure is shown in Figure \ref{ppp}. This network structure is much sparser and clearer. 
Note that this is different from using larger penalty on MLE which also gives a more sparse network structure but without statistical significance. In contrast, our procedure is able to test the significance of each edge. 
Also note that this 95\% confidence is for each edge individually, not the whole graph. If we want to recover the whole graph, that is a multiple testing problem for which we can apply multiple testing techniques on the $p$-values given by our algorithm, for example the Holm-Bonferroni method \cite{holm1979simple}.

%

\section{Conclusion}\label{sec:conclusion}

In this paper we consider the hypothesis testing problem on low dimensional parameters in high dimensional models with cone constraint on the parameters. We provide modified Wald/Score/Likelihood ratio procedures to test whether the low dimensional parameters are on the boundary of the cone constraint or not. We prove that following our procedure we can get an asymptotic designed Type I error under the null. 
Our algorithm has stronger power compared to the standard methods where we ignore the constraint.

For future work, it is of interest to consider more general constraint $C = \{ \bm\alpha: f(\bm \alpha) \geq 0 \}$ and possibly nonlinear boundary set $M$. Another future extension is to develop algorithms for models where some of our assumptions are violated. For example, for continuous time diffusion model, our Score Condition is violated \cite{gomez2015estimating}. Extending our algorithm to incorporate this model is work in progress.

\newpage
{\small
\bibliographystyle{plain}
\bibliography{paper}
}

\newpage
\appendix

\section{SUPPLEMENTARY MATERIAL}


We provide the proofs of Lemmas used in the paper. Some of the proofs are motivated by \cite{fang2017testing}. 

\begin{lemma}
\label{lemma:CLT}
Suppose all the conditions in Section \ref{sec:assumption} are satisfied, for any vector $\bm v \in \mathbb{R}^p$ with $\| \bm v \|_0 \leq s$, we have
\begin{equation}
\label{CLT}
\frac{\sqrt{n} \bm v^\top  \nabla \ell(\bm \beta^*)}{\sqrt{\bm v^\top \ H^*\bm v}} \overset{d}{\to} N(0,1).
\end{equation}
\end{lemma}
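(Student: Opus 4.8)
The plan is to recognize the left-hand side as a normalized sum of independent, identically distributed, mean-zero random variables and invoke a triangular-array central limit theorem. Write $\nabla\ell(\bm\beta^*) = -\frac1n\sum_{i=1}^n \nabla\log\mathcal L_i(\bm\beta^*)$ and set $Z_i = -\bm v^\top\nabla\log\mathcal L_i(\bm\beta^*)$, so that $\bm v^\top\nabla\ell(\bm\beta^*) = \frac1n\sum_{i=1}^n Z_i$ with the $Z_i$ i.i.d.\ across the $n$ independent trials. By the Score Condition, $\mathbb E Z_i = -\bm v^\top \mathbb E\big(\nabla\log\mathcal L_i(\bm\beta^*)\big) = 0$, and by the information-matrix (second Bartlett) identity, valid under the same differentiation-under-the-integral regularity already invoked for the Score Condition, $\mathbb E\big(\nabla\log\mathcal L_i(\bm\beta^*)\,\nabla\log\mathcal L_i(\bm\beta^*)^\top\big) = H^*$, hence $\mathrm{Var}(Z_i) = \bm v^\top H^*\bm v$. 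Therefore the quantity in the lemma equals $\tfrac{1}{\sqrt n}\sum_{i=1}^n Z_i/\sqrt{\bm v^\top H^*\bm v}$, a normalized sum of i.i.d.\ standardized summands, and it remains to verify a Lindeberg-type condition.

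I would verify the Lyapunov condition for $\delta = 2$, i.e.\ that $\mathbb E|Z_1|^{4}/\big(n\,(\bm v^\top H^*\bm v)^{2}\big)\to 0$. For the denominator, the Sparse Eigenvalue Condition gives $\bm v^\top H^*\bm v \ge c_{\min}\|\bm v\|_2^2$ since $\|\bm v\|_0\le s$. For the numerator, Hölder's inequality together with the element-wise bound on the score at $\bm\beta^*$ gives $|Z_1| = |\bm v^\top\nabla\log\mathcal L_1(\bm\beta^*)| \le \|\bm v\|_1\,\|\nabla\log\mathcal L_1(\bm\beta^*)\|_\infty \le a\sqrt s\,\|\bm v\|_2$, using $\|\bm v\|_1\le\sqrt s\,\|\bm v\|_2$ for $s$-sparse $\bm v$ (if the per-trial score is only sub-Gaussian rather than bounded, the same bound holds up to a logarithmic factor and the argument is unchanged). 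Combining, the Lyapunov ratio is bounded by $a^{4}s^{2}\|\bm v\|_2^{4}/\big(n\,c_{\min}^{2}\|\bm v\|_2^{4}\big) = a^{4}s^{2}/(c_{\min}^{2}n)$, which tends to $0$ under the mild requirement $s^2/n = o(1)$ implied by the sample-complexity conditions used throughout (in particular $s^6\log^2 p/n = o(1)$ of Lemma~\ref{lemma_normal}). The Lyapunov CLT then delivers the claimed convergence to $N(0,1)$.

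The main point requiring care is that this is genuinely a triangular-array statement: the ambient dimension $p$ and the direction $\bm v$ (hence its support, of size up to $s = s_n$) are allowed to grow with $n$, so the classical i.i.d.\ CLT does not apply directly and one must control the Lyapunov (equivalently Lindeberg) remainder uniformly in $\bm v$ — which is exactly where the interplay of the sparsity of $\bm v$, the element-wise bound on the score, and the sparse-eigenvalue lower bound on $\bm v^\top H^*\bm v$ is used. A secondary subtlety is the identification $\mathrm{Var}(Z_i) = \bm v^\top H^*\bm v$: because $H^*$ is defined here as $\mathbb E(\nabla^2\ell(\bm\beta^*))$, one must pass through the information-matrix equality to relate it to the outer-product covariance of the per-trial score, which is legitimate under the regularity already assumed.
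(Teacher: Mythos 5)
Your proof is correct and follows essentially the same route as the paper's: decompose the score into i.i.d.\ per-trial contributions, use the Score Condition for mean zero and the information-matrix identity for the variance, and verify a Lyapunov condition using the sparse-eigenvalue lower bound, the sparsity of $\bm v$, and the element-wise boundedness of the score. The only differences are cosmetic --- you use the fourth-moment ($\delta=2$) Lyapunov condition where the paper uses the third-moment one, and you are somewhat more explicit than the paper about invoking the second Bartlett identity to pass from $H^*=\mathbb E(\nabla^2\ell(\bm\beta^*))$ to the covariance of the per-trial score.
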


\begin{proof}
We define $\xi_i(\bm \beta^*) = -\nabla \log \mathcal L_i(\bm \beta^*)$, where $\mathcal L_i$ is the likelihood function for one trial $i$. According to \eqref{negative_likelihood}, we have $\nabla \ell(\bm \beta^*) = -\frac{1}{n} \sum_i \nabla \log \mathcal L_i(\bm \beta^*) = \frac{1}{n} \sum_i \xi_i(\bm \beta^*)$ and from now we write $\xi_i$ for simplicity. From the definition of $H^*$ we have 
\begin{equation}
H^* = n\text{Var}\Big(\ell(\bm \beta^*)\Big) = \text{Var}(\xi_i),
\end{equation}
and hence
\begin{equation}
\text{Var}(\bm v^\top \xi_i) = \bm v^\top H^*\bm v.
\end{equation}

From the score condition we have $\mathbb{E} [\xi_i] = 0$ and hence
\begin{equation}
\mathbb{E} [\bm v^\top \xi_i] = 0.
\end{equation}

We then know that $\frac{\bm v^\top \xi_i}{\sqrt{\bm v^\top \bm H^*\bm v}}$ has mean 0 and variance 1. Therefore the LHS of \eqref{CLT} is sum of $n$ independent random variables. We then verify the Lyapunov condition \cite{asymptotic}:
\begin{equation}
\begin{aligned}
 & \quad \,\lim_{n\to\infty} n^{-\frac{3}{2}} \sum_i \mathbb{E} \bigg|\frac{\bm{v}^\top \xi_i}{\sqrt{\bm{v}^\top H^*\bm{v}}}\bigg|^3 \\
& \leq \lim_{n\to\infty} n^{-\frac{3}{2}} \sum_i \mathbb{E} \bigg|\frac{\bm{v}^\top \xi_i}{\sqrt{c_{\min}} \|\bm{v}\|_2}\bigg|^3 \\
& \leq \lim_{n\to\infty} n^{-\frac{3}{2}} \sum_i \mathbb{E} \bigg|\frac{\bm{v}^\top \xi_i}{\sqrt{\frac{c_{\min}}{s}} \|\bm{v}\|_1}\bigg|^3 \\
& = \lim_{n\to\infty} n^{-\frac{3}{2}} \Big(\frac{s}{c_{\min}}\Big)^{\frac{3}{2}}  \sum_i \mathbb{E} \bigg|\frac{\bm{v}^\top \xi_i}{ \|\bm{v}\|_1}\bigg|^3 \\
& \leq \lim_{n\to\infty} n^{-\frac{1}{2}} \Big(\frac{s}{c_{\min}}\Big)^{\frac{3}{2}} \max{(\xi_i)} \\
& = 0,
\end{aligned}
\end{equation}
where the first inequality comes from the sparse eigenvalue condition on $H^*$ with sparse $\bm v$; the second inequality comes from Cauchy-Schwarz inequality and $\| \bm{v} \|_0 \leq s$.

Now since the Lyapunov condition is satisfied, we can apply the central limit theorem and obtain
\begin{equation}
\frac{1}{\sqrt{n}}\frac{\sum_i\bm{v}^\top \xi_i}{\sqrt{\bm{v}^\top H^*\bm{v}}}  \overset{d}{\to} N(0,1),
\end{equation}
which is just
\begin{equation}
\frac{\sqrt{n} \bm{v}^\top  \nabla \ell(\beta^*)}{\sqrt{\bm{v}^\top H^*\bm{v}}} \overset{d}{\to} N(0,1).
\end{equation}
\end{proof}


\begin{lemma} 
\label{lemma:grad_true}
Suppose all the conditions in Section \ref{sec:assumption} are satisfied, we have
\begin{equation}
\label{grad_true}
\|\nabla \ell(\bm \beta^*)\|_{\infty} = \mathcal O_{\mathbb P}\Big( \sqrt{\frac{\log{p}}{n}}\Big).
\end{equation}
\end{lemma}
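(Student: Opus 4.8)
The plan is to write $\nabla\ell(\bm\beta^*)$ as an empirical average of i.i.d.\ mean-zero random vectors with well-behaved coordinates, and then combine a coordinatewise exponential tail bound with a union bound over the $p$ coordinates. Concretely, reusing the notation of Lemma~\ref{lemma:CLT}, set $\xi_i = -\nabla \log \mathcal L_i(\bm\beta^*)$ so that $\nabla\ell(\bm\beta^*) = \frac1n\sum_{i=1}^n \xi_i$. By the Score Condition, $\mathbb{E}[\xi_i] = \bm 0$, and by the boundedness part of the Sparse Eigenvalue Condition each coordinate $[\xi_i]_j$ has absolute value controlled by the constant $a$ (or, more generally, a sub-exponential tail).

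First I would fix a coordinate $j \in \{1,\dots,p\}$ and apply Hoeffding's inequality to the i.i.d.\ mean-zero sum $[\nabla\ell(\bm\beta^*)]_j = \frac1n\sum_i [\xi_i]_j$: for any $t>0$,
\begin{equation}
\Pr\Big( \big| [\nabla\ell(\bm\beta^*)]_j \big| \geq t \Big) \leq 2\exp\Big( -\frac{n t^2}{2a^2}\Big).
\end{equation}
Next I would union bound over the $p$ coordinates, obtaining
\begin{equation}
\Pr\Big( \| \nabla\ell(\bm\beta^*)\|_\infty \geq t \Big) \leq 2p\exp\Big( -\frac{n t^2}{2a^2}\Big).
\end{equation}
Finally, choosing $t = C\sqrt{\log p / n}$ with $C$ a sufficiently large constant makes the right-hand side equal to $2\,p^{\,1 - C^2/(2a^2)} \to 0$, which is precisely the claim $\|\nabla\ell(\bm\beta^*)\|_\infty = \mathcal O_{\mathbb P}(\sqrt{\log p/n})$.

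The only genuinely delicate point is the concentration input: one needs the per-trial scores $\xi_i$ to have bounded (or at least sub-exponential) coordinates so that a Hoeffding- or Bernstein-type inequality is available; this is exactly what the element-wise boundedness assumption supplies, and it is easily checked for the linear and generalized linear models of interest. If one instead only assumes a finite variance $\text{Var}([\xi_i]_j) \leq a$, the same conclusion still follows, more crudely, from Chebyshev's inequality together with the union bound (at the cost of a worse constant), so the statement is robust to the precise form of the tail assumption. The remaining steps are entirely routine.
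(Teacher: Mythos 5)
Your argument is exactly the paper's proof: coordinatewise Hoeffding using the element-wise bound $a$ from the Sparse Eigenvalue Condition, a union bound over the $p$ coordinates, and the choice $t = C\sqrt{\log p/n}$ with $C$ large enough that $2p^{1-C^2/(2a^2)} \to 0$. One caveat on your closing aside: replacing Hoeffding by Chebyshev does \emph{not} recover the same conclusion, since $\Pr(|[\nabla\ell(\bm\beta^*)]_j|\geq t) \leq a/(nt^2)$ union-bounded over $p$ coordinates forces $t \gg \sqrt{p/n}$ rather than $\sqrt{\log p/n}$, so the exponential (sub-Gaussian/sub-exponential) tail is genuinely needed for the stated rate, not merely for a better constant.
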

\begin{proof}
Each element $\big[\nabla \ell(\bm \beta^*)\big]_j$ is the average over $n$ terms with absolute value bounded by $a$. According to Hoeffding's inequality \cite{hoeffding1963probability} we have
\begin{equation}
P\Big(\big| \big[\nabla\ell(\bm \beta^*)\big]_j \big| \geq t\Big) \leq 2e^{-\frac{nt^2}{2a^2}}.
\end{equation}

Apply union bound and let $t = C\sqrt{\frac{\log p}{n}}$ we have
\begin{equation}
\begin{aligned}
P\bigg(\|\nabla \ell(\bm \beta^*)\|_{\infty} > C\sqrt{\frac{\log p}{n}}\bigg) 
\leq p \cdot P\Big(\big| \big[\ell(\bm \beta^*)\big]_j \big| \geq C\sqrt{\frac{\log p}{n}}\Big) 
\leq p \cdot 2e^{-\frac{C^2\log p}{2a^2}} 
\leq 2p^{1-\frac{C^2}{2a^2}}.
\end{aligned}
\end{equation}

We can take large enough $C$ so that \eqref{grad_true} holds with high probability.
\end{proof}

%

\begin{lemma}
\label{lemma:hessian_true}
Suppose all the conditions in Section \ref{sec:assumption} are satisfied, we have
\begin{equation}
\label{hessian_true}
\Big\|\nabla^2\ell(\bm \beta^*) - H^*\Big\|_{\infty}  = \mathcal O_{\mathbb P}\Big( \sqrt{\frac{\log{p}}{n}}\Big).
\end{equation}
\end{lemma}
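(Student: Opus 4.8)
The plan is to mirror the proof of Lemma~\ref{lemma:grad_true}, replacing the gradient by the Hessian and the union bound over $p$ coordinates by one over the $p^2$ entries of a $p\times p$ matrix. First I would write $\nabla^2\ell(\bm \beta^*) = \frac 1n \sum_{i=1}^n \zeta_i$, where $\zeta_i = -\nabla^2 \log \mathcal L_i(\bm \beta^*)$ are i.i.d.\ $p\times p$ random matrices, and observe that by the definition $H(\bm \beta) = \mathbb E(\nabla^2\ell(\bm \beta))$ we have $\mathbb E[\zeta_i] = H^*$. Consequently, for each fixed pair $(j,k)$ the scalar $[\nabla^2\ell(\bm \beta^*)]_{jk} - H^*_{jk}$ is the centered average of the i.i.d.\ bounded scalars $(\zeta_i)_{jk}$, exactly the setting to which a scalar concentration inequality applies.

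Next I would invoke the element-wise boundedness hypothesis in the Sparse Eigenvalue Condition, under which each $(\zeta_i)_{jk}$ lies in $[-a,a]$; Hoeffding's inequality then gives, for every fixed $(j,k)$,
\begin{equation}
\Pr\Big( \big| [\nabla^2\ell(\bm \beta^*)]_{jk} - H^*_{jk} \big| \geq t \Big) \leq 2\exp\Big( -\frac{n t^2}{2a^2} \Big).
\end{equation}
A union bound over the $p^2$ entries together with the choice $t = C\sqrt{\log p / n}$ yields
\begin{equation}
\Pr\Big( \big\| \nabla^2\ell(\bm \beta^*) - H^* \big\|_{\infty} > C\sqrt{\tfrac{\log p}{n}} \Big) \leq 2 p^2 \exp\Big( -\frac{C^2 \log p}{2a^2} \Big) = 2 p^{\,2 - C^2/(2a^2)},
\end{equation}
which tends to $0$ once $C$ is taken large enough (any $C$ with $C^2 > 4a^2$ works). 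Hence $\| \nabla^2\ell(\bm \beta^*) - H^* \|_{\infty} = \mathcal O_{\mathbb P}(\sqrt{\log p / n})$, as claimed.

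There is no serious obstacle here; the argument is the matrix analogue of Lemma~\ref{lemma:grad_true}, and the only points requiring a little care are bookkeeping ones: (i) the boundedness that Hoeffding needs is on the per-trial Hessians $\zeta_i$ rather than merely on the population matrix $H^*$ (this is the same reading of the boundedness assumption already used implicitly in the proof of Lemma~\ref{lemma:grad_true}), and (ii) the union bound now runs over $p^2$ rather than $p$ terms, which only doubles the exponent's numerator prefactor and is absorbed by enlarging $C$. If one wished to weaken boundedness to, say, a sub-exponential tail on the entries of $\zeta_i$, the identical scheme goes through with Bernstein's inequality in place of Hoeffding's, at the cost of an extra additive $\log p / n$ term that is dominated by $\sqrt{\log p / n}$.
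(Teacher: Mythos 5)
Your proposal is correct and matches the paper's own proof essentially verbatim: both apply Hoeffding's inequality to each centered entry $[\nabla^2\ell(\bm\beta^*)]_{jk}-H^*_{jk}$ (using the element-wise boundedness of the per-trial Hessians), take a union bound over the $p^2$ entries, set $t=C\sqrt{\log p/n}$, and absorb the $p^2$ factor by choosing $C$ large. Your added remarks on where the boundedness is needed and on a Bernstein-type relaxation are sensible but not required.
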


\begin{proof}

By Hoeffding's inequality again we have
\begin{equation}
\begin{aligned}
P\bigg(\big|\nabla^2_{jk}\ell(\bm \beta^*) - H^*_{jk}\big| \geq C\sqrt{\frac{\log{p}}{n}}\bigg) 
 \leq 2\exp \Big\{-\frac{2n^2C^2\frac{\log{p}}{n}}{4na^2}\Big\} 
  \leq p^{-\frac{C^2}{2a^2}}. \\
\end{aligned}
\end{equation}

Apply union bound we have
\begin{equation}
\begin{aligned}
P\bigg(\Big\|\nabla^2\ell(\bm \beta^*) - H^*\Big\|_{\infty}  \geq C\sqrt{\frac{\log{p}}{n}}\bigg) 
 \leq \sum_{j,k=1...p} P\bigg(\Big|\nabla^2_{jk}\ell(\bm \beta^*) - H^*_{jk}\Big| \geq C\sqrt{\frac{\log{p}}{n}}\bigg) 
 \leq 2p^{2-\frac{C^2}{2a^2}}.
\end{aligned}
\end{equation}

We can take large enough $C$ so that \eqref{hessian_true} holds with high probability.

\end{proof}

\begin{lemma} 
\label{lemma:H}
Suppose all the conditions in Section \ref{sec:assumption} are satisfied, for any $\widetilde{\bm \beta} = \bm \beta^* + u(\widehat{\bm \beta}-\bm \beta^*)$ with $u \in [0,1]$ we have
\begin{gather}
\| \nabla^2 \ell(\widetilde{\bm \beta}) \|_{\infty} = \mathcal{O}_{\mathbb P}(1),  \\
\|\nabla^2 \ell(\widetilde{\bm \beta})-H^*\|_{\infty}=\mathcal{O}_{\mathbb P}\Big(s\sqrt{\frac{\log{p}}{n}}\Big). 
\end{gather}
\end{lemma}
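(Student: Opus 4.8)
The plan is to control $\nabla^2\ell(\widetilde{\bm\beta})$ by comparison with the population Fisher information $H^*$, splitting the deviation into a stochastic part evaluated at the true parameter (handled by Lemma~\ref{lemma:hessian_true}) and a displacement part coming from $\widetilde{\bm\beta}-\bm\beta^*$ (handled by the Smooth Hessian Condition together with the $\ell_1$ estimation rate).

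First I would establish the second bound. By the triangle inequality, $\|\nabla^2\ell(\widetilde{\bm\beta})-H^*\|_\infty \le \|\nabla^2\ell(\widetilde{\bm\beta})-\nabla^2\ell(\bm\beta^*)\|_\infty + \|\nabla^2\ell(\bm\beta^*)-H^*\|_\infty$. The second summand is $\mathcal{O}_{\mathbb P}(\sqrt{\log p/n})$ by Lemma~\ref{lemma:hessian_true}. For the first summand, since $\widetilde{\bm\beta}-\bm\beta^* = u(\widehat{\bm\beta}-\bm\beta^*)$ with $u\in[0,1]$, the Smooth Hessian Condition gives $\|\nabla^2\ell(\widetilde{\bm\beta})-\nabla^2\ell(\bm\beta^*)\|_\infty \le L\|\widetilde{\bm\beta}-\bm\beta^*\|_1 \le L\|\widehat{\bm\beta}-\bm\beta^*\|_1$, and the Estimation Accuracy Condition together with $\lambda=\mathcal{O}(\sqrt{\log p/n})$ makes this $\mathcal{O}_{\mathbb P}(\lambda s) = \mathcal{O}_{\mathbb P}(s\sqrt{\log p/n})$. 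Since $s\ge 1$, the first summand dominates, giving $\|\nabla^2\ell(\widetilde{\bm\beta})-H^*\|_\infty = \mathcal{O}_{\mathbb P}(s\sqrt{\log p/n})$.

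The first bound then follows from one more triangle inequality: $\|\nabla^2\ell(\widetilde{\bm\beta})\|_\infty \le \|\nabla^2\ell(\widetilde{\bm\beta})-H^*\|_\infty + \|H^*\|_\infty$, where $\|H^*\|_\infty=\mathcal{O}(1)$ by the element-wise boundedness part of the Sparse Eigenvalue Condition, and $\|\nabla^2\ell(\widetilde{\bm\beta})-H^*\|_\infty = o_{\mathbb P}(1)$ because the sample-complexity assumption $s^6\log^2 p/n = o(1)$ in Lemma~\ref{lemma_normal} implies $s\sqrt{\log p/n}\to 0$. Hence $\|\nabla^2\ell(\widetilde{\bm\beta})\|_\infty = \mathcal{O}_{\mathbb P}(1)$.

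This lemma is essentially a bookkeeping step rather than a deep result, so I do not expect a real obstacle; the only points requiring attention are that the Lipschitz inequality of the Smooth Hessian Condition is stated in the $\ell_1$ metric, so one must invoke the $\ell_1$ (not merely $\ell_2$) consistency rate from the Estimation Accuracy Condition, and that upgrading $\mathcal{O}_{\mathbb P}(s\sqrt{\log p/n})$ to $\mathcal{O}_{\mathbb P}(1)$ relies on $s\sqrt{\log p/n}=o(1)$. Both are guaranteed by the assumptions of Section~\ref{sec:assumption} under the stated scaling.
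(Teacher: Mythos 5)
Your proof is correct and follows essentially the same route as the paper: the same triangle-inequality decomposition around $\nabla^2\ell(\bm\beta^*)$, with the displacement term controlled by the Smooth Hessian Condition plus the $\ell_1$ estimation rate and the stochastic term by Lemma~\ref{lemma:hessian_true}, and the first claim deduced from the second via $\|H^*\|_\infty=\mathcal{O}(1)$. Your explicit remark that the sample-complexity scaling is what upgrades $\mathcal{O}_{\mathbb P}(s\sqrt{\log p/n})$ to $o_{\mathbb P}(1)$ is a detail the paper leaves implicit, but the argument is the same.
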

\begin{proof}

From the definition we know $\widetilde{\bm \beta}$ is of the same order with $\bm \beta^*$ and $\widehat{\bm \beta}$. The first claim comes from the second claim and the condition $\| H^* \|_{\infty} = \mathcal{O}(1)$. For the second claim, we have,
\begin{equation}
\begin{aligned}
\label{H_2}
\|\nabla^2 \ell(\widetilde{\bm \beta})-H^*\|_{\infty} &\leq \|\nabla^2 \ell(\widetilde{\bm \beta})-\nabla^2 \ell(\bm \beta^*)\|_{\infty}  + \|\nabla^2 \ell(\bm \beta^*)-H^*\|_{\infty}.
\end{aligned}
\end{equation}

For the first term in \eqref{H_2}, according to Smooth Hessian Condition and Estimation Accuracy Condition we have
\begin{equation}
\|\nabla^2 \ell(\widetilde{\bm \beta})-\nabla^2 \ell(\bm \beta^*)\|_{\infty} \leq L \cdot \|\widetilde{\bm \beta} - \bm \beta^*\|_1 = \mathcal O\Big( s\sqrt{\frac{\log{p}}{n}}\Big).
\end{equation}

For the second term in \eqref{H_2}, by Lemma \ref{lemma:hessian_true} it is $\mathcal{O}_{\mathbb P}\big(\sqrt{\frac{\log{p}}{n}}\big)$. Combining this two terms we get our desired result.

\end{proof}

\begin{lemma}

Suppose all the conditions in Section \ref{sec:assumption} are satisfied, we have
\begin{equation}
\label{C6}
\| \nabla^2_{\alpha\bm \theta}\ell(\widehat{\bm \beta}) - \bm w^{*T} \nabla^2_{\bm \theta\bm \theta}\ell(\widehat{\bm \beta}) \|_{\infty} = \mathcal{O}_{\mathbb P}\Big(s^2\sqrt{\frac{\log{p}}{n}}\Big).
\end{equation}
\end{lemma}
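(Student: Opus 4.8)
The plan is to decompose the quantity $\nabla^2_{\alpha\bm\theta}\ell(\widehat{\bm\beta}) - \bm w^{*\top}\nabla^2_{\bm\theta\bm\theta}\ell(\widehat{\bm\beta})$ around its population analogue, which is exactly zero by the definition $\bm w^* = H^{*-1}_{\bm\theta\bm\theta}H^*_{\bm\theta\alpha}$, i.e. $H^*_{\alpha\bm\theta} = \bm w^{*\top}H^*_{\bm\theta\bm\theta}$. First I would write
\begin{equation}
\nabla^2_{\alpha\bm\theta}\ell(\widehat{\bm\beta}) - \bm w^{*\top}\nabla^2_{\bm\theta\bm\theta}\ell(\widehat{\bm\beta})
= \big[\nabla^2_{\alpha\bm\theta}\ell(\widehat{\bm\beta}) - H^*_{\alpha\bm\theta}\big]
- \bm w^{*\top}\big[\nabla^2_{\bm\theta\bm\theta}\ell(\widehat{\bm\beta}) - H^*_{\bm\theta\bm\theta}\big],
\end{equation}
using $H^*_{\alpha\bm\theta} - \bm w^{*\top}H^*_{\bm\theta\bm\theta} = \bm 0$. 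The first bracket is a submatrix of $\nabla^2\ell(\widehat{\bm\beta}) - H^*$, which Lemma \ref{lemma:H} (applied with $\widetilde{\bm\beta}=\widehat{\bm\beta}$, i.e. $u=1$) bounds in $\|\cdot\|_\infty$ by $\mathcal{O}_{\mathbb P}(s\sqrt{\log p/n})$.

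For the second term I would apply H\"older's inequality row-wise: each entry is of the form $\bm w^{*\top}\bm z$ where $\bm z$ is a column of $\nabla^2_{\bm\theta\bm\theta}\ell(\widehat{\bm\beta}) - H^*_{\bm\theta\bm\theta}$, so
\begin{equation}
\big\|\bm w^{*\top}\big[\nabla^2_{\bm\theta\bm\theta}\ell(\widehat{\bm\beta}) - H^*_{\bm\theta\bm\theta}\big]\big\|_\infty
\leq \|\bm w^*\|_1 \cdot \big\|\nabla^2_{\bm\theta\bm\theta}\ell(\widehat{\bm\beta}) - H^*_{\bm\theta\bm\theta}\big\|_\infty.
\end{equation}
By the Sparsity Condition $\|\bm w^*\|_0 = s$ together with the element-wise boundedness of $H^*$ (hence of $\bm w^* = H^{*-1}_{\bm\theta\bm\theta}H^*_{\bm\theta\alpha}$, using the Sparse Eigenvalue Condition to control $\|H^{*-1}_{\bm\theta\bm\theta}\|$), we get $\|\bm w^*\|_1 = \mathcal{O}(s)$. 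Combined with the Lemma \ref{lemma:H} bound $\mathcal{O}_{\mathbb P}(s\sqrt{\log p/n})$ on the second factor, this term is $\mathcal{O}_{\mathbb P}(s^2\sqrt{\log p/n})$. Adding the two contributions gives the claimed rate, since the $s^2$ term dominates the $s$ term.

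The main obstacle — really the only subtlety — is justifying $\|\bm w^*\|_1 = \mathcal{O}(s)$ rigorously: one needs a uniform (in $p$) bound on the $\ell_\infty$ norm of the entries of $\bm w^*$, which follows from $\|H^*_{\bm\theta\alpha}\|_\infty = \mathcal{O}(1)$ and $\lambda_{\min}(H^*_{\bm\theta\bm\theta}) \geq c_{\min}$ (so that $\|\bm w^*\|_2 \leq \|H^*_{\bm\theta\alpha}\|_2 / c_{\min} = \mathcal{O}(\sqrt s)$), and then $\|\bm w^*\|_1 \leq \sqrt s \|\bm w^*\|_2 = \mathcal{O}(s)$ by Cauchy--Schwarz on the support of size $s$. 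Everything else is a routine triangle-inequality-plus-H\"older assembly of already-established lemmas. I should also note that this bound is precisely what certifies $\bm w^*$ is feasible for the Dantzig program \eqref{step1:dantzig} with the choice $\lambda' = \mathcal{O}(s^2\sqrt{\log p/n})$, which is how this lemma feeds into the consistency bound for $\widehat{\bm w}$ in Lemma \ref{lemma:diff_w}.
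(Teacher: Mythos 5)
Your proposal is correct and follows essentially the same route as the paper: add and subtract the population quantity (which vanishes since $H^*_{\alpha\bm\theta} = \bm w^{*\top}H^*_{\bm\theta\bm\theta}$), bound the $\alpha\bm\theta$ block by Lemma \ref{lemma:H}, and handle the $\bm\theta\bm\theta$ block via H\"older with $\|\bm w^*\|_1 = \mathcal{O}(s)$, so the $s^2\sqrt{\log p/n}$ term dominates. The only difference is cosmetic: the paper simply invokes the sparsity condition to assert $\|\bm w^*\|_1 = \mathcal{O}(s)$, whereas you supply an explicit (and valid) justification via the sparse eigenvalue condition.
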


\begin{proof}
By triangle inequality we have
\begin{equation}
\begin{aligned}
\|\nabla^2_{\alpha\bm \theta}\ell(\widehat{\bm \beta}) - \bm w^{*T} \nabla^2_{\bm \theta\bm \theta}\ell(\widehat{\bm \beta}) \|_{\infty}  &\leq \| H_{\alpha\bm \theta}^* - \bm w^{*T}H_{\bm \theta\bm \theta}^* \|_{\infty} +  \| \nabla^2_{\alpha\bm \theta}\ell(\widehat{\bm \beta}) - H_{\alpha\bm \theta}^* \|_{\infty}  \\
& \qquad\qquad\qquad\qquad\quad + \| \bm w^{*T} \{ H_{\bm \theta\bm \theta}^* - \nabla^2_{\bm \theta\bm \theta}\ell(\widehat{\bm \beta}) \} \|_{\infty}.
\end{aligned}
\end{equation}

The first term is 0 by definition. The second term is $\mathcal{O}_{\mathbb P}\big(s\sqrt{\frac{\log{p}}{n}}\big)$ according to Lemma \ref{lemma:H}. The third term is $\mathcal{O}_{\mathbb P}\Big(s^2\sqrt{\frac{\log{p}}{n}}\Big)$ according to Lemma \ref{lemma:H} and the sparse condition $\| \bm w^* \|_1 = s$. 
Combining these three terms we get our desired result.
%

\end{proof}

\begin{lemma}
\label{lemma:diff_w}
Suppose all the conditions in Section \ref{sec:assumption} are satisfied, we have
\begin{equation}
\|\widehat{\bm w} - \bm w^*\|_1 = \mathcal{O}_{\PP}(\lambda' s) \,\,\, \text{and} \,\,\, \|\widehat{\bm w} - \bm w^*\|_2 = \mathcal{O}_{\PP}(\lambda' \sqrt s).
\end{equation}
\end{lemma}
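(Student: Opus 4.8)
The plan is to run the standard Dantzig-selector analysis, viewing $\nabla^2_{\bm\theta\bm\theta}\ell(\widehat{\bm\beta})$ as the ``design'' matrix and $\nabla^2_{\alpha\bm\theta}\ell(\widehat{\bm\beta})$ as the ``response'', with $\bm w^* = H^{*-1}_{\bm\theta\bm\theta}H^*_{\bm\theta\bm\alpha}$ as the target. First I would establish \emph{feasibility} of $\bm w^*$ for the Dantzig program \eqref{step1:dantzig}: the preceding lemma \eqref{C6} gives $\|\nabla^2_{\alpha\bm\theta}\ell(\widehat{\bm\beta}) - \bm w^{*\top}\nabla^2_{\bm\theta\bm\theta}\ell(\widehat{\bm\beta})\|_{\infty} = \mathcal{O}_{\PP}(s^2\sqrt{\log p/n})$, and since $\lambda' = \mathcal{O}(s^2\sqrt{\log p/n})$ is taken with a large enough constant, $\bm w^*$ satisfies the constraint with high probability. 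Consequently, by optimality of $\widehat{\bm w}$ we get $\|\widehat{\bm w}\|_1 \le \|\bm w^*\|_1$ on that event.

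Next, writing $\bm\delta = \widehat{\bm w} - \bm w^*$ and $\mathcal S = \text{supp}(\bm w^*)$ with $|\mathcal S| = s$, the inequality $\|\widehat{\bm w}\|_1 \le \|\bm w^*\|_1$ produces the usual \emph{cone condition} $\|\bm\delta_{\mathcal S^c}\|_1 \le \|\bm\delta_{\mathcal S}\|_1$, so $\bm\delta \in \mathcal C(\mathcal S)$ and $\|\bm\delta\|_1 \le 2\|\bm\delta_{\mathcal S}\|_1 \le 2\sqrt s\,\|\bm\delta\|_2$. At the same time, subtracting the exact constraint inequality satisfied by $\widehat{\bm w}$ from the approximate one satisfied by $\bm w^*$ and using the triangle inequality yields $\|\bm\delta^\top\nabla^2_{\bm\theta\bm\theta}\ell(\widehat{\bm\beta})\|_{\infty} \le \lambda' + \mathcal{O}_{\PP}(s^2\sqrt{\log p/n}) = \mathcal{O}_{\PP}(\lambda')$.

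Then I would combine the two displays through the eigenvalue condition. Since $\nabla^2_{\bm\theta\bm\theta}\ell(\widehat{\bm\beta})$ is a principal submatrix of $\nabla^2\ell(\widehat{\bm\beta})$, padding $\bm\delta$ with zeros on the $\bm\alpha$-coordinates and invoking the restricted eigenvalue condition of Section \ref{sec:assumption} (in the form of Remark \ref{SE_RE}, applied to a vector in $\mathcal C(\mathcal S)$, which $\bm\delta$ is) gives $c_{\min}\|\bm\delta\|_2^2 \le \bm\delta^\top\nabla^2_{\bm\theta\bm\theta}\ell(\widehat{\bm\beta})\bm\delta \le \|\bm\delta\|_1\,\|\bm\delta^\top\nabla^2_{\bm\theta\bm\theta}\ell(\widehat{\bm\beta})\|_{\infty} \le 2\sqrt s\,\|\bm\delta\|_2\cdot\mathcal{O}_{\PP}(\lambda')$, where the middle inequality is H\"older and the last uses the cone bound. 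Dividing by $\|\bm\delta\|_2$ gives $\|\bm\delta\|_2 = \mathcal{O}_{\PP}(\lambda'\sqrt s)$, and plugging back into $\|\bm\delta\|_1 \le 2\sqrt s\,\|\bm\delta\|_2$ yields $\|\bm\delta\|_1 = \mathcal{O}_{\PP}(\lambda' s)$, which is the claim.

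The main obstacle is the eigenvalue step: the Sparse Eigenvalue Condition as literally stated applies only to exactly $\mathcal O(s)$-sparse vectors, whereas $\bm\delta$ is merely approximately sparse (it lies in $\mathcal C(\mathcal S)$), so one must use the Restricted Eigenvalue version from Remark \ref{SE_RE} and spell out that the lower bound transfers from $\nabla^2\ell(\widehat{\bm\beta})$ to its $\bm\theta\bm\theta$ sub-block via zero-padding. A secondary bookkeeping point is that feasibility of $\bm w^*$, the bound \eqref{C6}, the Estimation Accuracy Condition, and the eigenvalue bound on $\nabla^2\ell(\widehat{\bm\beta})$ must all be taken on a common high-probability event so that the $\mathcal{O}_{\PP}$ manipulations are valid.
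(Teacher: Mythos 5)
Your proposal is correct and follows the same skeleton as the paper's proof: feasibility of $\bm w^*$ from \eqref{C6}, the triangle inequality to get $\|\nabla^2_{\bm\theta\bm\theta}\ell(\widehat{\bm\beta})\,\bm\delta\|_{\infty}=\mathcal O_{\PP}(\lambda')$, H\"older to bound the quadratic form by $\mathcal O_{\PP}(\lambda')\|\bm\delta\|_1$, an eigenvalue lower bound, and $\|\bm\delta\|_1\lesssim\sqrt s\,\|\bm\delta\|_2$ to close the loop. The one place you diverge is the justification of the last two ingredients: the paper invokes an external result (Lemma 3 of the cited work) asserting that the Dantzig solution satisfies $\|\widehat{\bm w}\|_0=cs$, so that $\bm\delta$ is \emph{exactly} $\mathcal O(s)$-sparse and the Sparse Eigenvalue Condition plus Cauchy--Schwarz apply directly; you instead derive the cone condition $\|\bm\delta_{\mathcal S^c}\|_1\le\|\bm\delta_{\mathcal S}\|_1$ from $\|\widehat{\bm w}\|_1\le\|\bm w^*\|_1$ and use the Restricted Eigenvalue variant. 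Your route is exactly the one the paper itself spells out in Remark \ref{RE_OK} of the appendix, and it is arguably cleaner since it avoids relying on the unproven-here sparsity claim for $\widehat{\bm w}$; your flagged caveats (zero-padding to transfer the eigenvalue bound to the $\bm\theta\bm\theta$ block, and working on a common high-probability event) are the right ones and are handled implicitly in the paper.
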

\begin{proof}

By definition we know $\widehat{\bm w}$ satisfies
\begin{equation}
\| \nabla^2_{\alpha\bm \theta}\ell(\widehat{\bm \beta}) - \hat{\bm w}^{T} \nabla^2_{\bm \theta\bm \theta}\ell(\widehat{\bm \beta}) \|_{\infty} \leq \lambda'.
\end{equation}

Define $\bm \delta = \widehat{\bm w} - \bm w^*$, according to \eqref{C6} we have
\begin{equation}
\begin{aligned}
\| \nabla^2_{\bm \theta\bm \theta}\ell(\widehat{\bm \beta}) \cdot \bm \delta \|_{\infty} \leq \| \nabla^2_{\alpha\bm \theta}\ell(\widehat{\bm \beta}) - \widehat{\bm w}^{T} \nabla^2_{\bm \theta\bm \theta}\ell(\widehat{\bm \beta}) \|_{\infty} + \| \nabla^2_{\alpha\bm \theta}\ell(\widehat{\bm \beta}) - \bm w^{*T} \nabla^2_{\bm \theta\bm \theta}\ell(\widehat{\bm \beta}) \|_{\infty} \leq C\lambda',
\end{aligned}
\end{equation}
for some constant $C$. Therefore we have
\begin{equation}
\label{ww}
\bm \delta^\top  \cdot \nabla^2_{\bm \theta\bm \theta}\ell(\widehat{\bm \beta}) \cdot \bm \delta \leq \| \bm \delta \|_{1} \cdot \| \nabla^2_{\bm \theta\bm \theta}\ell(\widehat{\bm \beta}) \cdot \bm \delta \|_{\infty} \leq C\lambda'\| \bm \delta \|_{1}.
\end{equation}

Following Lemma 3 in \cite{yu2016statistical} we know $\|\bm{\hat w}\|_0 = cs$ for some constant $c$. 
By Sparse Eigenvalue Condition, we have 
\begin{equation}
\label{SE_appendix}
\bm \delta^\top  \nabla^2_{\bm \theta\bm \theta}\ell(\widehat{\bm \beta}) \bm \delta \geq c_{\min} \|\bm \delta\|_2^2.
\end{equation}

Plug into \eqref{ww} we obtain
\begin{equation}
C\lambda'\| \bm \delta \|_{1} \geq c_{\min}\|\bm \delta\|_2^2 \geq c_{\min}\|\bm \delta\|_1^2\cdot \frac{1}{s},
\end{equation}
which gives
\begin{equation}
\|\bm \delta\|_1 \leq \frac{C\lambda's}{c_{\min}} = \mathcal{O}_{\PP}(\lambda's),
\end{equation}
and also
\begin{equation}
\|\bm \delta\|_2  = \mathcal{O}_{\PP}(\lambda' \sqrt{s}),
\end{equation}
\end{proof}

\begin{remark}
\label{RE_OK}
We show that Restricted Eigenvalue Condition also works here, as discussed in Remark \ref{SE_RE}. According to the optimality condition of Dantzig selector we have $\|\bm{\hat w}\|_1 \leq \|\bm{w^*}\|_1$. 
Also note that since $\|\bm w^*_{\mathcal S^c}\|_1 = 0$ we have
\begin{equation}
\|\bm{\hat w}_{\mathcal S}\|_1 + \|\bm{\hat w}_{\mathcal S^c}\|_1 \leq \|\bm w^*_{\mathcal S}\|_1.
\end{equation}

By triangle inequality we have
\begin{equation}
\|\bm w^*_{\mathcal S}\|_1 \leq \|\bm{\hat w}_{\mathcal S}\|_1 + \|\bm{\delta}_{\mathcal S}\|_1.
\end{equation}

Summing up these two inequalities we obtain
\begin{equation}
\|\bm{\delta}_{\mathcal S^c}\|_1 \leq \|\bm{\delta}_{\mathcal S}\|_1,
\end{equation}
which means $\bm \delta \in \mathcal C(\mathcal S)$. Therefore with Restricted Eigenvalue Condition we can still get \eqref{SE_appendix} and everything follows.

Moreover, in the proof of Lemma \ref{lemma_normal}, since we take $\bm v = (1; -{\bm w}^*)$, clearly we have $\bm v \in \mathcal C(\mathcal S)$. Therefore the proof of Lemma \ref{lemma:CLT} also hold under Restricted Eigenvalue Condition. Combining these two results we see that Restricted Eigenvalue Condition also suffices for our algorithm to be valid.
\end{remark}

\end{document}